\newtheorem*{rep@theorem}{\rep@title}
\newcommand{\newreptheorem}[2]{%
\newenvironment{rep#1}[1]{%
 \def\rep@title{#2 \ref{##1}}%
 \begin{rep@theorem}}%
 {\end{rep@theorem}}}
\def\@seccntformat#1{\@ifundefined{#1@cntformat}%
   {\csname the#1\endcsname\quad}  
   {\csname #1@cntformat\endcsname}
}
\let\oldappendix\appendix 
\renewcommand\appendix{%
    \oldappendix
    \newcommand{\section@cntformat}{\appendixname~\thesection.\quad}}
\newtheorem{transformation}{Transformation}
\newcommand*\Heq{\ensuremath{\overset{\kern2pt H}{=}}}
\newcommand\pepaeq{\mathrel{\stackrel{\makebox[0pt]{\mbox{\normalfont\tiny def}}}{=}}}
\newcommand{\keywords}[1]{\par\addvspace\baselineskip
\noindent\keywordname\enspace\ignorespaces#1}
\begin{document}

\title{Communication Patterns in Mean Field Models for Wireless Sensor Networks}
\author{Mahmoud Talebi, Jan Friso Groote, Jean-Paul Linnartz}

\institute{Eindhoven University of Technology, Eindhoven, Netherlands\\
\mails\\}

\maketitle

\begin{abstract}
Wireless sensor networks are usually composed of a large number of nodes, and with the increasing processing power and power consumption efficiency they are expected to run more complex protocols in the future. These pose problems in the field of verification and performance evaluation of wireless networks. In this paper, we tailor the mean-field theory as a modeling technique to analyze their behavior. We apply this method to the slotted ALOHA protocol, and establish results on the long term trends of the protocol within a very large network, specially regarding the stability of ALOHA-type protocols.

\keywords{Mean field approximation, Radio communication, Slotted ALOHA, Stability, Markov chains}
\end{abstract}

\section{Introduction}
The Internet of Things requires the connection of many nodes, therefore we expect that in the future we will build Wireless Sensor Networks (WSNs) with a very large number of nodes. The analysis of such systems dates from 1970's but still many problems have not been solved. In particular, the behavior of such networks under heavy traffic loads requires improvements. Typical theoretical studies can only cover a limited number of aspects, while conclusions from simulations of large networks are often hard to generalize. 

In this paper we aim at developing a better understanding of the overall health of the network, for instance in terms of the number of nodes that remain in backlog; that is, those nodes which have been unsuccessful in delivering their messages and keep trying to retransmit the same message. In the past, such analyses have been presented, e.g. \cite{namislo1984analysis,van1990stability}, but these studies modeled ALOHA, which is a specific radio protocol that supports communication from many nodes to a single central base station. We witness a need to model the network protocols in more detail \cite{dandelski2015scalability}. 

In fact we observed that mean-field theory can be a powerful method to extend the approach in \cite{namislo1984analysis} and \cite{van1990stability}. Two of the most important ways in which mean-field theory outperforms the explicit Markov chain analysis is that it does not depend on the size of the network, even so, it gives better approximation when the numbers are very large; and at the same time with more complex specifications in protocols, the size of the semantic model only grows linearly in terms of the number of equations.

In this paper we show that mean-field theory can reconfirm the results in \cite{carleial1975bistable} and \cite{van1990stability}, and aim to show that the mean field approximation method can be suitable to model networks. In particular, we focus on an important behavior which occurs in networks, called \textit{bistability}, i.e., the situation where the network may stay for prolonged periods of time either in a good state with favorable performance or in an undesirable state in which many nodes keep repeating messages, but their transmission are lost because of mutual interference. That is, the system can converge to two or more different steady states, each of which would be a valid solution for the system. To the best of our knowledge such behavior has not yet been studied using the mean field approximations up until now.

The mean-field theory originally took shape as a method of approximating complex stochastic processes in physics, and was first formally described for communication systems in \cite{benaim2008class}. However, similar heuristics have been proposed before in \cite{hillston2005fluid} which conformed with steps taken while doing the analysis by means of a continuous time Markov chain, in the context of stochastic process algebras. Steps to apply these approaches to protocols in communication networks have also been taken e.g., in \cite{bakhshi2010automating}. Apart from these more formal and more structured approaches, mean-field analysis has also been applied to the analysis of exponential back-off algorithms in communication protocols \cite{bianchi2000performance,bordenave2005random}, and to study end-to-end delay and throughput in ad hoc networks together with non-equilibrium methods which focus on modeling specific topologies \cite{srinivasa2012statistical}.

For this text, we are going to apply the methods as discussed in \cite{bortolussi2013continuous} to modeling wireless sensor networks and do a detailed modeling of phenomena that are observable in such networks. The structure of the rest of this paper is as follows: in \textit{section \ref{sec:method}}, we give a concise explanation of the essential theory of the mean field approximation method, and then in \textit{section \ref{sec:interference}} we first explain interference in networks and then apply it to a network running the slotted ALOHA protocol in a simple setting, and consequently show its bistability. In \textit{section \ref{sec:broadcast}}, we first describe a way to include broadcasting and then consider a slotted ALOHA network with a more complex node behavior. Finally, in \textit{section \ref{sec:vector}} we briefly point to the use of vector fields in identifying the conditions in which a network shows stability in its behavior.

\section{Mean-field approximation}\label{sec:method}
In this section, we describe the necessary concepts of the mean field approximation method. We consider that every system consists of a number of components. Components are formally defined as follows:

\begin{definition}
(Component). A component is a quadruple $\mathcal{C}=(S, Act, \rightarrow, s_0)$, where:
\begin{itemize}
\item $S$ is a nonempty set of states.
\item $Act$ is a set of action names.
\item $\rightarrow\subseteq S \times Act \times R \times S$ is a transition relation with $R~=~\mathbb{R}^{>0}\cup \{\bot\}$ called the set of transitions rates.
\item $s_0 \in S$ is the initial state.
\end{itemize}
\end{definition}

We also write a transition $(s, \alpha, r, s') \in \rightarrow$ where $\alpha \in Act$, $r \in R$ and $s, s' \in S$ as $s \xrightarrow{(\alpha, r)} s'$. A more descriptive name for the concept of component would be component transition system, since it closely resembles transition systems and other familiar automata.

For a component $\mathcal{C}$, at any point in its computation it can be in a state $Y\in S$ which is called the state of the component. Now let us consider a system composed of $N$ identical components.

\begin{definition}
(System). Let $\mathcal{C}_i=(S, Act, \rightarrow, s_{0_{i}})$ where $1\leq i\leq N$ be identical components, except for their initial conditions $s_{0_i}$. A system $\mathcal{M}$ consisting of these components is defined as:
\[\mathcal{M}\pepaeq(\mathcal{C}_{1}~||\ldots||~\mathcal{C}_{N}).\]

Where $||$ is a parallel operator. Taking the state of each component $\mathcal{C}_i$ to be described by a corresponding $Y_i\in S_i$, we define the state of the system as:
\begin{equation}\label{eq:systemstate}
\langle Y_1,\ldots,Y_N\rangle.
\end{equation}
\end{definition}

Since all components are identical, the mean-field theory allows the modeling of system behavior on an abstract level. The overall state of the system is equivalent for different permutations of values in the vector (\ref{eq:systemstate}), and the state of the system can be represented in a \textit{numerical vector form}.

We take a system model $\mathcal{M}$ with $N$ components $\mathcal{C}_{i}, 1\leq i\leq N$, each with $n$ states. The \textit{numerical vector form} of $\mathcal{M}$, $X^{(N)}_\mathcal{M}$ is a vector with $n$ entries which is an alternative representation of the state of the system
\[X_\mathcal{M}^{(N)}=\langle X_{1}^{(N)},\ldots,X_{n}^{(N)}\rangle.\]

Here we explicitly mention $N$ in the notation, since in general we take the size of a system as a variable. The entry $X_{j}^{(N)}$ in vector $X_\mathcal{M}^{(N)}$ records the number of components which are currently in state $j$. Quick observations are that $0\leq X_{j}^{(N)}\leq N$ and $\sum_{j\in \{1,\ldots,n\}}X_{j}^{(N)}=N$.

Next we describe a normalized vector $\hat{X}_\mathcal{M}^{(N)}$ derived from $X_\mathcal{M}^{(N)}$ to be the following:
\[\hat{X}_\mathcal{M}^{(N)}=\langle \hat{X}_{1}^{(N)},\ldots,\hat{X}_{n}^{(N)}\rangle\]
where for each entry $\hat{X}_{j}^{(N)}=\frac{X_{j}^{(N)}}{N}$. The entries of this vector are also called \textit{occupancy measures}.

Based on this latter representation of the state of a system we define a class of automata called Population Continuous Time Markov chains, which are based on the normalized Population Continuous Time Markov Chain models (PCTMCs) with system size $N$, described in \cite{bortolussi2013continuous}.

\begin{definition}
(Normalized PCTMC model). Let $\mathcal{M}$ be a system of $N$ identical components $\mathcal{C}$, where the number of states of $\mathcal{C}$ is $n$. A normalized Population Continuous Time Markov Chain model for $\mathcal{M}$ is a triple $\hat{\chi}_{\mathcal{M}}^{(N)}=(\hat{X}^{(N)}, \mathcal{F}^{(N)}, \hat{x}_0^{(N)})$, where:
\begin{itemize}
\item $\hat{X}^{(N)}$ is a set of all vectors of the form $\langle \hat{X}_1^{(N)},\ldots,\hat{X}_n^{(N)}\rangle$, where for all $1\leq i\leq n$ it holds that $0\leq\hat{X}_i^{(N)}\leq 1$ and $\sum_{i}\hat{X}_i^{(N)}=1$. These vectors are the states of the system.
\item $\hat{x}_0^{(N)}\in \hat{X}^{(N)}$, is a vector of dimension $n$ which shows the initial state of the system. When the initial state of each component $\mathcal{C}_i$ is $s_{0_i}$, it is defined as:
\[\hat{x}_0^{(N)}=\langle \frac{\sum^{N}_{i=1} \textbf{1}\{s_{0_i}=s_1\}}{N},\ldots,\frac{\sum^{N}_{i=1} \textbf{1}\{s_{0_i}=s_n\}}{N} \rangle\]

where for $b$ a boolean formula:
\[\textbf{1}\{~b~\}=\left\{
  \begin{array}{l l}
    1\:: & \quad \text{if $b$ \textit{true}}\\
    0\:: & \quad \text{if $b$ \textit{false}}
  \end{array} \right.\]
is an indicator function.
\item $\mathcal{F}^{(N)}=\{f_1,\ldots,f_m\}$ is a set of transitions of the form $f_j=(a_j, s_j^{(N)}, t_j^{(N)}, r_j^{(N)})$ where:
\begin{itemize}
\item $a_j\in A$ is the label of the transition.
\item $s_j^{(N)}$ is a vector of dimension $n$, which shows which portion of each entry in a vector $\hat{x}\in \hat{X}^{(N)}$ is going to be used up by this transition.
\item $t_j^{(N)}$ is a vector of dimension $n$, which shows which values will be added to each entry in a vector $\hat{x}\in \hat{X}^{(N)}$ after this transition.
\item $r_j^{(N)}:A\times \hat{X}^{(N)}\rightarrow \mathbb{R}$ is the rate function of the transition.
\end{itemize}
Moreover, we define the \textit{state-change vector} $\nu_j^{(N)}=(t_j^{(N)}-s_j^{(N)})$ of $f_j$, showing the changes in the state of the system due to the transition $f_j$.
\end{itemize}

\end{definition}

Here we have not demonstrated how $\mathcal{F}^{(N)}$ is constructed, since it heavily depends on the way communications happen in a certain domain. In this paper we will give two possible procedures (Transformations \ref{prop1} and \ref{prop2}) which bridge the gap between component transition systems of wireless network nodes and the set of transitions $\mathcal{F}^{(N)}$ of a network of $N$ nodes.

In the following, the term PCTMC models always refers to normalized PCTMC models, unless otherwise stated. Next we give a formal definition of a System of Ordinary Differential Equations (ODEs) for a PCTMC model.

\begin{definition}
(System of ODEs). Let $\chi_{\mathcal{M}}^{(N)}=(\hat{X}^{(N)}, \mathcal{F}^{(N)}, \hat{x}_0^{(N)})$ be a PCTMC model. We consider an $n$-dimensional vector of functions $\hat{x}(t)=\langle \hat{x}_1(t),\ldots,\hat{x}_n(t)\rangle$ where for every $t \in \mathbb{R}_{\geq 0}$, $\hat{x}(t)\in \hat{X}^{(N)}$. The System of Ordinary Differential Equations for the PCTMC model $\chi_{\mathcal{M}}^{(N)}$ is defined as:
\begin{equation}\label{eqn:ode}
(\frac{d\hat{x}_1(t)}{dt},\ldots,\frac{d\hat{x}_n(t)}{dt})= \sum_{f\in \mathcal{F}} \nu_{f}^{(N)}\cdot r_{f}^{(N)}
\end{equation}
together with the initial condition:
\begin{equation}\label{eqn:ode2}
\langle\hat{x}_1(0),\ldots,\hat{x}_n(0)\rangle=\hat{x}^{(N)}_{0} 
\end{equation}
where we have $f=(a_f,s_f^{(N)},t_f^{(N)},r_f^{(N)})$.
\end{definition}

The system of ODEs above, together with the initial condition of the PCTMC model, can be solved to find a solution (or a number of solutions) which are $n$ curves showing the time evolution of the system. 

In this paper we intend to use theorem 4.1 from \cite{bortolussi2013continuous} as a basis for the validity of the system of ODEs in approximating the system model. This requires the PCTMC model to be \textit{density-dependent}. Therefore we first present the concept of Lipschitz continuity.

\begin{definition}\label{def:lipschitz}
(Lipschitz Continuity). A function f is called Lipschitz continuous iff there is a positive, real constant $C$, such that for all points $x$ and $y$ over its domain:
\[|f(x)-f(y)|\leq C\cdot |x-y|.\]
\end{definition}

Note that any continuous function or any function with a bounded first derivative is Lipschitz continuous.

\begin{definition}\label{def:densitydep}
(Density-dependency). A PCTMC model $\hat{\chi}_{\mathcal{M}}^{(N)}=(\hat{X}^{(N)}, \mathcal{F}^{(N)}, \hat{x}_0^{(N)})$, is density-dependent iff the following conditions hold for all $\hat{x}\in\hat{X}^{(N)}$ and $f\in\mathcal{F}^{(N)}$:
\begin{enumerate}
\item for the state-change vector $\nu_f^{(N)}$ of $f$, the vector $N\cdot\nu_f^{(N)}$ is independent of $N$.
\item there is a Lipschitz continuous and bounded function $g_{f}:\hat{X}^{(N)}\rightarrow \mathbb{R}_{\geq 0}$ such that for all $\hat{x}$ the rate function $r_f^{(N)}$ scales with system size $N$ as:
\[r_f^{(N)}(\hat{x})=N\cdot g_{f}(\hat{x}).\]
\end{enumerate} 
\end{definition}

Now we present the main theorem (theorem 4.1 from \cite{bortolussi2013continuous}), which states that systems of ODEs are approximations for density-dependent PCTMC models when the number of components $N$ is sufficiently large. 

For every $N$, let $\hat{\mathcal{M}}^{(N)}=(S,Q)$ be continuous-time Markov chains for PCTMC models $\chi_{\mathcal{M}}^{(N)}=(\hat{X}^{(N)}, \mathcal{F}^{(N)}, \hat{x}_0^{(N)})$ where the set $S$ and the generator matrix $Q$ are defined as follows: 
\begin{itemize}
\item $S=\hat{X}^{(N)}$, meaning that  $\hat{\mathcal{M}}^{(N)}$ and $\chi_{\mathcal{M}}^{(N)}$ have the same state space.
\item For each state $\hat{x}^{(N)} \in \hat{X}^{(N)}$ and for each transition $f=(a, s^{(N)}, t^{(N)}, r^{(N)})$ in $\mathcal{F}^{(N)}$, if $\hat{x}^{(N)}$ allows $f$, meaning that in the vector $(\hat{x}^{(N)} - s^{(N)})$ there are no negative entries, and if $(\hat{x}^{(N)}+\nu_f^{(N)})\in \hat{X}^{(N)}$, then we have the following entry in the generator matrix $Q$:
\[q_{ij} =r^{(N)}~,~\text{with}~i=\hat{x}^{(N)}\text{and}~j=\hat{x}^{(N)}+\nu_f^{(N)}.\]
\end{itemize}

We define the transition probability matrix $P_{\hat{\mathcal{M}}^{(N)}}(t)$, with entries $p_{ij}(t)$ defining the probability of being in state $j$ at time $t$, if we have been in state $i$ at time $0$. Now we define the Markov process $\{\hat{\mathcal{M}}^{(N)}(t):t\geq 0\}$ where $\mathbb{P}\{\hat{\mathcal{M}}^{(N)}(0)=\hat{x}_0^{(N)}\}=1$, and:
\[\mathbb{P}\{\hat{\mathcal{M}}^{(N)}(t)=\hat{x}^{(N)} \}=p_{ij}(t)~,~\text{where}~i=\hat{x}_0^{(N)}~\text{and}~j=\hat{x}^{(N)}.\]

\begin{theorem}
(Deterministic Approximation for PCTMCs). For values $N\geq 1$ let $\chi_{\mathcal{M}}^{(N)}=(\hat{X}^{(N)}, \mathcal{F}^{(N)}, \hat{x}_0^{(N)})$ be density-dependent PCTMCs, and let $\{\hat{\mathcal{M}}^{(N)}(t): t\geq 0\}$ denote the corresponding Markov processes. Assume that for some point $\hat{x}_0\in \hat{X}^{(N)}$, we have $\lim_{N\rightarrow \infty}\hat{x}_0^{(N)}=\hat{x}_0$. Moreover, assume that the solution to the system of ODEs for $\chi_{\mathcal{M}}^{(N)}$ is $\hat{x}(t)$. Then for any finite time horizon $T<\infty$, it holds that:
\[\mathbb{P}\{\lim_{N\rightarrow \infty}\sup_{0\leq t\leq T}||\hat{\mathcal{M}}^{(N)}(t)-\hat{x}(t)||=0\}=1\]

\end{theorem}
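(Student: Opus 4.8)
Proof proposal (sketch). The plan is to run the classical Kurtz-style fluid-limit argument, whose three ingredients are a Poisson (random-time-change) representation of the chain, a uniform strong law of large numbers for the Poisson process, and Gr\"onwall's inequality. First I would fix the ODE side: by density-dependency, $\ell_f := N\,\nu_f^{(N)}$ is independent of $N$ and $r_f^{(N)}(\hat x) = N\,g_f(\hat x)$, so the right-hand side of (\ref{eqn:ode}) equals $F(\hat x) := \sum_{f\in\mathcal{F}^{(N)}} \ell_f\, g_f(\hat x)$, which is independent of $N$, bounded, and Lipschitz continuous (a finite sum of bounded Lipschitz functions). Hence, by the Cauchy--Lipschitz (Picard--Lindel\"of) theorem, the system of ODEs with initial condition $\hat x_0$ has a unique solution $\hat x(t)$ on $[0,T]$.

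Next I would represent the chain driver-by-driver. Introducing independent unit-rate Poisson processes $\{Y_f\}_{f\in\mathcal{F}^{(N)}}$, the process admits the representation
\[
\hat{\mathcal{M}}^{(N)}(t) \;=\; \hat{x}_0^{(N)} \;+\; \sum_{f\in\mathcal{F}^{(N)}} \nu_f^{(N)}\, Y_f\!\Big(N\!\int_0^t g_f\big(\hat{\mathcal{M}}^{(N)}(s)\big)\,ds\Big),
\]
with $g_f$ truncated to $0$ whenever the jump $\nu_f^{(N)}$ would leave $\hat X^{(N)}$. Writing the centered rescaled processes $\tilde Y_f^{(N)}(u) := \tfrac1N Y_f(Nu) - u$ and multiplying through by $N$ inside each argument, this becomes
\[
\hat{\mathcal{M}}^{(N)}(t) = \hat{x}_0^{(N)} + \sum_f \ell_f\!\int_0^t g_f\big(\hat{\mathcal{M}}^{(N)}(s)\big)\,ds + \sum_f \ell_f\, \tilde Y_f^{(N)}\!\Big(\!\int_0^t g_f\big(\hat{\mathcal{M}}^{(N)}(s)\big)\,ds\Big).
\]
Since each $g_f$ is bounded, say by $B$, every time-argument lies in the fixed compact interval $[0,U]$ with $U := BT$. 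The functional strong law of large numbers for the Poisson process gives $\sup_{0\le u\le U}|\tilde Y_f^{(N)}(u)|\to 0$ almost surely, so together with the hypothesis $\hat x_0^{(N)}\to\hat x_0$ the random quantity $\varepsilon_N := \|\hat x_0^{(N)}-\hat x_0\| + \sum_f \|\ell_f\|\,\sup_{0\le u\le U}|\tilde Y_f^{(N)}(u)|$ tends to $0$ almost surely.

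Finally I would subtract the integral form of $\hat x(t) = \hat x_0 + \int_0^t F(\hat x(s))\,ds$ from the display above and estimate, using the Lipschitz constants of the $g_f$ (combine them into a single constant $L$):
\[
\big\|\hat{\mathcal{M}}^{(N)}(t) - \hat x(t)\big\| \;\le\; \varepsilon_N \;+\; L\!\int_0^t \big\|\hat{\mathcal{M}}^{(N)}(s) - \hat x(s)\big\|\,ds, \qquad 0\le t\le T,
\]
on the almost-sure event where the truncated representation coincides with the untruncated one. Gr\"onwall's inequality then yields $\sup_{0\le t\le T}\|\hat{\mathcal{M}}^{(N)}(t)-\hat x(t)\| \le \varepsilon_N\, e^{LT}$, and since $\varepsilon_N\to 0$ almost surely the stated conclusion follows. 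The main obstacle is not the deterministic Gr\"onwall step but the probabilistic setup: establishing the Poisson representation rigorously, invoking the uniform (functional) law of large numbers for the Poisson process, and checking that the boundary truncation of the rates is immaterial --- for which one uses that the $g_f$ are bounded and that the limit trajectory $\hat x(t)$ remains in the domain throughout $[0,T]$, so the chain, once within $\varepsilon_N$ of it, never actually needs the truncation.
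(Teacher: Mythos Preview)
Your sketch is a faithful outline of the classical Kurtz fluid-limit argument (random-time-change representation, functional SLLN for the Poisson drivers, Gr\"onwall), and as such it is essentially correct. However, note that the paper does \emph{not} prove this theorem at all: it is simply quoted as Theorem~4.1 of \cite{bortolussi2013continuous} and used as a black box. So there is nothing to compare against; you have supplied a proof where the paper only supplies a citation. The argument you give is precisely the one underlying Kurtz's theorem, on which the cited result is based, so your proposal is appropriate and more detailed than what the paper itself offers.
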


The theorem above states that as $N$ grows, the difference between the behavior of the explicit model $\hat{\mathcal{M}}^{(N)}(t)$ and its approximation $\hat{x}(t)$ diminishes almost surely.

\section{Modeling interference in Slotted ALOHA networks}\label{sec:interference}
In this section we are first going to discuss interference in wireless networks and how it can be modeled and then apply this modeling approach to a Slotted ALOHA network.

\subsection{Interference}\label{subsec:interference}
In a WSN an increase in the number of nodes trying to send messages simultaneously increases the chances of interference between the signals. In this section, we are going to use results from \cite{van1990stability} to better express the notion of interference in networks. 

Despite interference, in wireless networks there is always a chance for one of the signals to be strong enough to \textit{capture} a receiver \cite{van1990stability}. We consider a set of $i$ senders within range of a receiver antenna, and we express the probability of a signal from one of the nodes capturing the receiver from a distance $r_{t}$ as:
\begin{equation}\label{eqn:capture}
q(i)=i\cdot\mathbb{P}\{\textit{capture}~|~\text{$(i-1)$ interfering signals}~,~\text{distance $r_t$}\}
\end{equation}

Since any of the $i$ messages has a chance of getting through, and these probabilities are mutually exclusive, meaning that it is not possible for two or more packets to capture the receiver simultaneously (therefore always $q(i)\leq 1$).

We express the scattering pattern of nodes around a receiver as a probability density function $f(r)$, henceforth called spatial distribution function, which gives the probability of a node being at distance $r$ from the receiver. The conditional probability of capture for a node becomes \cite{van1990stability}:
\begin{equation}\label{eqn:qi}
q(i)=i\cdot\int_{0}^{\infty}\bigg[\int_{0}^{\infty}\frac{f(r)}{1+zr_t^{\beta}r^{-\beta}}dr\bigg]^{(i-1)}f(r_t) dr_t
\end{equation}

Parameter $z$ describes a power threshold: when facing $i-1$ interfering signals, the power of a successful message should always exceed the power of the joint $i-1$ interfering signals by $z$. Parameter $\beta$ shows pathloss: the signal gets weaker as distance r increases, according to $r^{-\beta}$. The term in brackets is the probability of surviving a single interfering signal when the power threshold is $z$ and the sender is at distance $r_t$, and the enclosing integral averages over different values for $r_t$ when we know that it is distributed according to spatial distribution function $f(r)$.

We assume that the spatial distribution function $f(r)$ can have two patterns. A uniform $f(r)$ is:
\begin{equation}\label{eqn:uniform}
f(r)=\left\{
        \begin{array}{l l}
        2r&\text{for}~0\leq r\leq1,\\
        0&\text{for}~r>1.\\
        \end{array}
    \right.
\end{equation}
Which is the probability of the node being on a circle with radius $r$ around the receiver, where $r$ is taken to be a normalized distance. A log-normal $f(r)$ is (\cite{van1990stability}):
\begin{equation}\label{eqn:lognormal}
f(r)=\frac{\beta}{\sqrt{2\pi}r\sigma_d}\exp{\Big\{-\frac{\beta^2 \log{r^2}}{2\sigma_d^2}\Big\}}.
\end{equation}

Here, $\sigma_d^2$ is the spatial logarithmic variance. The log-normal distribution is more realistic, in the sense that no node can be infinitely close to the receiver and hence the probability of capture when the number of nodes is very high will actually go to zero, which is the case in reality.

Plots of the uniform and log-normal distributions in formula (\ref{eqn:uniform}) and formula (\ref{eqn:lognormal}) for $\beta=4$ and $\sigma_d=2$ are presented in Fig.~\ref{fig:plots}.
\begin{figure}[h]
\centering
\includegraphics[scale=.8]{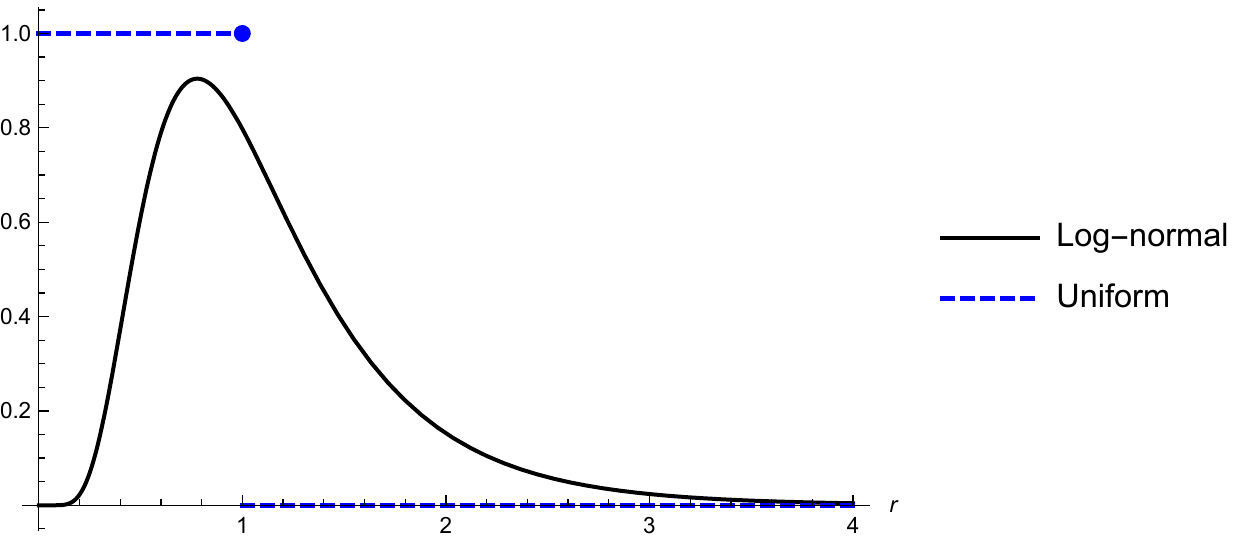}
\caption{Plot of the uniform and log-normal spatial distributions} \label{fig:plots}
\end{figure}

\begin{lemma}\label{lem:lem1}
Let $f(r)$ be any spatial distribution for which $\lim_{r\to \infty} f(r)=0$ and also $\lim_{r\to \infty} \frac{f(r)^2}{f'(r)}=0$. For bounded and positive values of parameter $z$, the function $q(i)$ in formula (\ref{eqn:qi}) is Lipschitz continuous over domain $[1,\infty)$.
\end{lemma}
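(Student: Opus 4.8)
The plan is to show that $q$ has a bounded first derivative on $[1,\infty)$, which by the remark after Definition~\ref{def:lipschitz} immediately gives Lipschitz continuity. Treating $i$ as a continuous variable $x\in[1,\infty)$, I would write
\[
q(x)=x\int_0^\infty h(r_t)^{\,x-1}\,f(r_t)\,dr_t,
\qquad
h(r_t):=\int_0^\infty\frac{f(r)}{1+z r_t^{\beta}r^{-\beta}}\,dr,
\]
and note first that $0\le h(r_t)\le\int_0^\infty f(r)\,dr=1$, since the integrand is a damped version of the density $f$ and $z,\beta>0$. Hence $0\le h(r_t)^{x-1}\le 1$ for every $x\ge 1$, so the integral defining $q$ converges and in fact $q(x)\le x$.

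Differentiating under the integral sign (justified because $h(r_t)^{x-1}f(r_t)$ is dominated by the integrable function $f(r_t)$ uniformly for $x$ in compact subsets, and $\partial_x h^{x-1}=h^{x-1}\ln h$ with $|h^{x-1}\ln h|$ bounded), I get
\[
q'(x)=\int_0^\infty h(r_t)^{\,x-1}f(r_t)\,dr_t
      \;+\;x\int_0^\infty h(r_t)^{\,x-1}\bigl(\ln h(r_t)\bigr)f(r_t)\,dr_t.
\]
The first term is bounded by $1$. For the second, the issue is that $\ln h(r_t)\to-\infty$ is possible as $r_t\to\infty$ (when $h(r_t)\to 0$), so I need the combination $x\,h(r_t)^{x-1}\lvert\ln h(r_t)\rvert$ to stay integrable against $f$, uniformly in $x\ge1$. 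Using the elementary bound $u^{x-1}\lvert\ln u\rvert\le \tfrac{1}{e(x-1)}$ for $u\in(0,1)$ does not help near $x=1$; instead I would split: for $h(r_t)\ge\tfrac12$ the factor $\lvert\ln h\rvert$ is bounded, and for $h(r_t)<\tfrac12$ I bound $x\,h^{x-1}\le C$ for a constant independent of $x\ge1$ (since $t\mapsto x t^{x-1}$ is bounded on $[0,\tfrac12]$ uniformly in $x\ge 1$, as its maximum is attained at an endpoint for large $x$ and is $O(1)$), leaving $\int \lvert\ln h(r_t)\rvert f(r_t)\,dr_t$ to be shown finite.

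The main obstacle is therefore precisely estimating $\int_0^\infty \lvert\ln h(r_t)\rvert\,f(r_t)\,dr_t<\infty$, and this is where the two hypotheses on $f$ enter. I would first get a lower bound on $h(r_t)$ of the form $h(r_t)\ge c\,r_t^{-\beta}$ (or similar algebraic decay) for large $r_t$, by keeping only the contribution of $r$ in a fixed window where $f$ is bounded below, so that $\lvert\ln h(r_t)\rvert=O(\ln r_t)$; then the hypotheses $f(r_t)\to0$ and $f(r_t)^2/f'(r_t)\to0$ (which force $f$ to decay faster than any power, as one checks for the log-normal density) guarantee $\int^\infty \ln(r_t)\,f(r_t)\,dr_t<\infty$ — here $f(r_t)^2/f'(r_t)\to0$ is the quantitative handle that lets an integration-by-parts or comparison argument control $\int^\infty \ln(r_t) f(r_t)\,dr_t$ via $\int^\infty f(r_t)\,dr_t$ and boundary terms that vanish. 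Combining the three bounds shows $q'$ is bounded on $[1,\infty)$, completing the proof. The uniform-distribution case is easier, since then $h(r_t)$ is bounded away from $0$ on the compact support $[0,1]$ and $\ln h$ causes no trouble at all.
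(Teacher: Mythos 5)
Your overall strategy---bound $q'$ on $[1,\infty)$ so that Lipschitz continuity follows from the remark after Definition~\ref{def:lipschitz}, split off the harmless first term, and isolate the $\ln h(r_t)$ singularity as $r_t\to\infty$---is the same as the paper's, and your treatment of the factor $x\,h^{x-1}$ (the uniform bound of $x\,t^{x-1}$ on $[0,\tfrac12]$ over $x\ge 1$) is more careful than the paper's. The paper instead applies L'H\^opital's rule to $\frac{\ln h(r_t)}{1/f(r_t)}$ and invokes the hypothesis $\lim_{r\to\infty}f(r)^2/f'(r)=0$ to conclude that $\ln\big(h(r_t)\big)\,f(r_t)\to 0$, i.e.\ that the integrand stays bounded; that limit is the only place the second hypothesis is used.

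The genuine gap is in your final step. Reducing the problem to $\int_0^\infty|\ln h(r_t)|\,f(r_t)\,dr_t<\infty$ is fine, but the route you sketch---lower-bounding $h(r_t)\ge c\,r_t^{-\beta}$ by restricting the inner integral to a fixed window of $r$, hence $|\ln h(r_t)|=O(\ln r_t)$, hence requiring $\int^\infty \ln(r_t)\,f(r_t)\,dr_t<\infty$---demands a logarithmic moment of $f$ that the hypotheses do not supply. Indeed $f^2/f'=-1/(1/f)'$, so $f^2/f'\to 0$ only forces $(1/f)'\to\infty$, i.e.\ $f(r)=o(1/r)$; it does not force super-polynomial decay as your parenthetical claims. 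The density $f(r)\propto 1/(r\ln^2 r)$ for large $r$ satisfies both hypotheses of the lemma yet has $\int^\infty \ln(r)\,f(r)\,dr=\infty$, so that step would fail. The repair is to take the lower bound on $h$ from the tail of $f$ itself rather than from a fixed window: for $r\ge r_t$ one has $1+zr_t^\beta r^{-\beta}\le 1+z$, hence, writing $F(r_t)=\int_0^{r_t}f(r)\,dr$,
\[
h(r_t)\;\ge\;\frac{1}{1+z}\int_{r_t}^\infty f(r)\,dr\;=\;\frac{1-F(r_t)}{1+z},
\]
and therefore, substituting $u=F(r_t)$,
\[
\int_0^\infty|\ln h(r_t)|\,f(r_t)\,dr_t\;\le\;\ln(1+z)+\int_0^1|\ln(1-u)|\,du\;=\;\ln(1+z)+1\;<\;\infty
\]
for every density $f$ and every bounded $z>0$. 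With that substitution your argument closes (and, as you note, the uniform case is immediate since there $h$ is bounded away from $0$ on the support $[0,1]$).
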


\begin{proof}
Taking the first derivative of \textit{formula \ref{eqn:qi}} yields:
\begin{flalign*}
\frac{dq}{d i}=&\int_{0}^{\infty}\bigg[\int_{0}^{\infty}\frac{f(r)}{1+zr_t^{\beta}r^{-\beta}}dr\bigg]^{(i-1)}f(r_t) dr_t\\
&+i\cdot\big(\int_{0}^{\infty}\bigg[\int_{0}^{\infty}\frac{f(r)}{1+zr_t^{\beta}r^{-\beta}}dr\bigg]^{(i-1)}\cdot \log{\big(\int_{0}^{\infty}\frac{f(r)}{1+zr_t^{\beta}r^{-\beta}}dr\big)}\cdot f(r_t) dr_t\big)\\
\end{flalign*}
which we show to be bounded over the domain $[1,\infty)$. We establish that the two addends are bounded, and hence the entire term is bounded.

\begin{itemize}
\item The first addend: the assumption is that $z$ is positive, therefore for any positive $r$ and $r_t$: 
\[1+zr_t^{\beta}r^{-\beta}>1 ~~~~~ \Longrightarrow ~~~~~\frac{1}{1+zr_t^{\beta}r^{-\beta}}<1\]

Since $f(r)$ is a probability density function, we have $\int_{0}^{\infty}f(r)dr=1$. Therefore, for any $r_t$:
\[\int_{0}^{\infty}\frac{f(r)}{1+zr_t^{\beta}r^{-\beta}}dr<1~~~~~~~~~~~~~~~~~~~~~~~~~~~~~~~~~~~~~~~~\]
\[~~~~~ \Longrightarrow ~~~~~\int_{0}^{\infty}\bigg[\int_{0}^{\infty}\frac{f(r)}{1+zr_t^{\beta}r^{-\beta}}dr\bigg]^{(i-1)}f(r_t) dr_t<\int_{0}^{\infty}[1]^{(i-1)}f(r_t) dr_t=1\]
and the first addend is bounded.

\item The second addend: the new term $\log{\big(\int_{0}^{\infty}\frac{f(r)}{1+zr_t^{\beta}r^{-\beta}}dr\big)}$ is unbounded (is -$\infty$) when:
\[\int_{0}^{\infty}\frac{f(r)}{1+zr_t^{\beta}r^{-\beta}}dr=0\]

But according to assumptions, since $\int_{0}^{\infty}f(r)dr=1$ and $\lim_{r\rightarrow\infty}f(r)=0$ for some finite $r$, $f(r)>0$. Therefore $-\infty$ happens only when $r_t\rightarrow +\infty$. In that case it is sufficient to show that:
\[\lim_{r_t\to +\infty} \log{\big(\int_{0}^{\infty}\frac{f(r)}{1+zr_t^{\beta}r^{-\beta}}dr\big)} \cdot f(r_t)\]

is zero. We know that: 
\[\lim_{r_t\to +\infty} \frac{1}{f(r_t)}=+\infty ~~ \Longrightarrow ~~  \lim_{r_t\to +\infty} \frac{\log{\big(\int_{0}^{\infty}\frac{f(r)}{1+zr_t^{\beta}r^{-\beta}}dr\big)}}{\frac{1}{f(r_t)}}=\frac{-\infty}{+\infty}\]
Therefore, there is sufficient ground to use the \textit{L'H\^{o}pital's rule}:
\[\lim_{r_t\to +\infty} \frac{\log{\big(\int_{0}^{\infty}\frac{f(r)}{1+zr_t^{\beta}r^{-\beta}}dr\big)}}{\frac{1}{f(r_t)}} \Heq \lim_{r_t\to +\infty}\frac{\int_{0}^{\infty}\frac{-\beta f(r)}{r^{-\beta}}dr}{\int_{0}^{\infty}\frac{f(r)}{zr^{-\beta}}dr}\cdot\frac{f(r_t)^2}{f'(r_t)}\]

which is zero since according to the assumptions:
\[\lim_{r_t\to +\infty} \frac{f(r_t)^2}{f(r_t)}=0\]
and the second addend is also bounded.
\end{itemize}
\end{proof}

\begin{corollary}\label{cor:cor1}
Substituting the $f(r)$ functions in formulae (\ref{eqn:uniform}) and (\ref{eqn:lognormal}) in the definition of $q(i)$ in (\ref{eqn:qi}) yields Lipschitz continuous functions.
\end{corollary}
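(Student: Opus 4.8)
The plan is to reduce the corollary to Lemma~\ref{lem:lem1}: for each of the two candidate spatial distributions it suffices to verify the two hypotheses of that lemma, namely $\lim_{r\to\infty}f(r)=0$ and $\lim_{r\to\infty}f(r)^2/f'(r)=0$, and then to invoke the lemma to conclude Lipschitz continuity of $q(i)$ on $[1,\infty)$. For the log-normal density (\ref{eqn:lognormal}) this is a direct computation: after simplifying the exponent one writes $f(r)=c\,r^{-1}\exp\{-c'(\log r)^2\}$ with $c=\beta/(\sqrt{2\pi}\sigma_d)$ and $c'=\beta^2/(2\sigma_d^2)>0$, from which $f'(r)/f(r)=-(1+2c'\log r)/r$, hence $f(r)^2/f'(r)=-\,r f(r)/(1+2c'\log r)$. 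Since $r f(r)=c\exp\{-c'(\log r)^2\}\to 0$ while the denominator grows unboundedly, this ratio tends to $0$, and $f(r)\to 0$ as well; Lemma~\ref{lem:lem1} then applies directly.

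For the uniform density (\ref{eqn:uniform}) a little care is needed, because $f$ vanishes identically for $r>1$, so $f(r)^2/f'(r)$ is a $0/0$ form and the lemma cannot be cited verbatim. Here I would instead argue directly that $q(i)$ has a bounded first derivative on $[1,\infty)$, which by the remark following Definition~\ref{def:lipschitz} already yields Lipschitz continuity. Since $f$ is supported on $[0,1]$, both integrals in (\ref{eqn:qi}) reduce to integrals over $[0,1]$; set $G(r_t)=\int_0^1 \frac{2r}{1+zr_t^{\beta}r^{-\beta}}\,dr$. One checks that $G$ is continuous on $[0,1]$ (dominated convergence, with the integrand bounded by $2r$) and that $0<G(r_t)<1$ for every $r_t\in[0,1]$, so $m:=\min_{[0,1]}G>0$. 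Substituting into the expression for $dq/di$ obtained in the proof of Lemma~\ref{lem:lem1}, the first addend is bounded by $1$ exactly as there, and in the second addend one uses $G(r_t)^{i-1}\le 1$ together with an elementary estimate $i\,G(r_t)^{i-1}\,|\log G(r_t)|\le 1/(e\,m)$, uniform over $r_t\in[m,1]$ and $i\ge 1$; this gives $|dq/di|\le 1+1/(em)$.

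The main obstacle is precisely this uniform case: because the density has compact support, the limit hypotheses of Lemma~\ref{lem:lem1} degenerate and one must bound $dq/di$ by hand. Within that, the only genuinely nonroutine point is the uniform-in-$i$ control of the term $i\,G(r_t)^{i-1}\log G(r_t)$ as $r_t$ ranges over the support, where $G(r_t)$ can approach either $1$ (making the logarithm small but the power non-decaying) or its minimum $m$ (making the power decay geometrically); the one-line optimisation of $i\,x^{i-1}$ over $x\in[m,1]$ resolves it. Everything else — continuity and positivity of $G$, and the bound on the first addend — is routine and is essentially already carried out inside the proof of Lemma~\ref{lem:lem1}.
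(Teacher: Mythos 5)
Your proposal is correct, and its overall architecture matches the paper's: the log-normal case is reduced to Lemma~\ref{lem:lem1} by verifying its two limit hypotheses, and the uniform case --- where those hypotheses degenerate to a $0/0$ form --- is handled by bounding $dq/di$ directly. The differences lie in the uniform half. The paper's Lemma~\ref{lem:lem11} specializes to $\beta=4$ so that the inner integral has the closed form $1-\sqrt{z}r_t^2\arctan(1/(\sqrt{z}r_t^2))$, argues only that the logarithm of this quantity stays finite on $(0,1)$, and then concludes that the second addend remains bounded ``after integrating \ldots{} and multiplication by $i$'' --- a step that is not justified as written, since a bound \emph{uniform in $i$} is exactly what is at stake. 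Your version works for arbitrary $\beta>0$ (you only need $G$ continuous and bounded below by some $m>0$ on the compact support $[0,1]$) and supplies precisely the missing ingredient, namely the estimate $\sup_{i\ge 1}\,i\,x^{i-1}\lvert\log x\rvert\le 1/(em)$ for $x\in[m,1]$, obtained by optimizing $i\,x^{i-1}$ over $i$. So your route is both more general and more rigorous at the one point where the paper's argument is thin; you also carry out the log-normal limit computation that the paper leaves ``to the interested reader,'' and your computation of $f^2/f'=-rf(r)/(1+2c'\log r)\to 0$ is right. Two cosmetic slips that do not affect the argument: the phrase ``uniform over $r_t\in[m,1]$'' should read ``uniform over $G(r_t)\in[m,1]$,'' and $G(0)=1$, so the strict inequality $G<1$ fails at the endpoint $r_t=0$ --- all you actually use is $m>0$ and $G\le 1$.
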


The interested reader with common skills in calculus could check that the log-normal spatial distribution in formula (\ref{eqn:lognormal}) satisfies the equation:
\[\lim_{r\to \infty} \frac{f(r)^2}{f'(r)}=0.\]
and is hence Lipschitz continuous. As for the uniform spatial distribution in formula (\ref{eqn:uniform}), matters are not as easy, since:
\[\lim_{r\to \infty} \frac{f(r)^2}{f'(r)}=\frac{0}{0}=\text{undefined}.\]

For the uniform spatial distribution, and for the specific case when $\beta=4$, we propose the following lemma.

\begin{lemma}\label{lem:lem11}
The function:
\[q(i)=i\cdot\int_{0}^{1}\bigg[\int_{0}^{1}\frac{2r}{1+zr_t^{4}r^{-4}}dr\bigg]^{(i-1)}2r_t dr_t\]
\[~~~~~~~~~~~~~=i\cdot\int_{0}^{1}\bigg[1-\sqrt{z}r_t^2\arctan(\frac{1}{\sqrt{z}r_t^2})\bigg]^{(i-1)}2r_t dr_t
\]
is Lipschitz continuous.
\end{lemma}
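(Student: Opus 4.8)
The plan is to imitate the proof of Lemma~\ref{lem:lem1}: I would show that $dq/di$ is bounded on $[1,\infty)$, which by the remark following Definition~\ref{def:lipschitz} (a bounded first derivative implies Lipschitz continuity) finishes the proof. First I would abbreviate $h(r_t)=1-\sqrt{z}\,r_t^{2}\arctan\!\big(1/(\sqrt{z}\,r_t^{2})\big)$ for the bracketed term — noting in passing that the two displayed forms of $q(i)$ coincide after multiplying the inner integrand $2r/(1+zr_t^{4}r^{-4})$ by $r^{4}/r^{4}$ and substituting $s=r^{2}$ — and record its basic properties: from $0<\arctan x<x$ for $x>0$ one gets $0<h(r_t)<1$ on $(0,1]$, with $h(r_t)\to 1$ as $r_t\to 0^{+}$, so setting $h(0)=1$ makes $h$ continuous and strictly positive on $[0,1]$; hence $h$ has a positive minimum there, $|\ln h|$ is bounded, and $\int_{0}^{1}|\ln h(r_t)|\,2r_t\,dr_t<\infty$. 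This also makes $h(r_t)^{i-1}$ a well-defined positive real for every real $i\ge 1$.

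Next I would differentiate under the integral sign — legitimate on each compact $[1,K]$, since the $i$-derivative of the integrand, $h(r_t)^{i-1}\big(1+i\ln h(r_t)\big)\,2r_t$, is dominated by the $i$-free integrable function $\big(1+K|\ln h(r_t)|\big)\,2r_t$ — to obtain
\[
\frac{dq}{di}=\int_{0}^{1}h(r_t)^{i-1}\,2r_t\,dr_t\;+\;i\int_{0}^{1}h(r_t)^{i-1}\ln h(r_t)\cdot 2r_t\,dr_t .
\]
The first addend is at most $\int_{0}^{1}2r_t\,dr_t=1$ since $0<h(r_t)<1$. The whole difficulty is the second addend, because of the factor $i$ in front: a crude bound on the integral would only give an $O(i)$ estimate, and — unlike in Lemma~\ref{lem:lem1} — the uniform distribution has compact support, so the L'H\^{o}pital argument used there (which exploited the density vanishing as $r\to\infty$) is unavailable.

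My plan to get around this is the elementary inequality $-x^{m}\ln x\le 1/(e\,m)$ for $x\in(0,1)$ and $m>0$ (equality at $x=e^{-1/m}$, proved by maximizing $x\mapsto x^{m}(-\ln x)$): it supplies exactly the $1/i$ decay needed to cancel the offending $i$. Applying it with $m=i-1$ and $x=h(r_t)$ gives, for $i>1$,
\[
\Big|\,i\int_{0}^{1}h(r_t)^{i-1}\ln h(r_t)\cdot 2r_t\,dr_t\,\Big|\;\le\;\frac{i}{e(i-1)}\int_{0}^{1}2r_t\,dr_t\;=\;\frac{i}{e(i-1)}\;\le\;\frac{2}{e}\quad(i\ge 2).
\]
On the leftover compact interval $i\in[1,2]$, where this estimate is vacuous, I would instead use $h(r_t)^{i-1}\le 1$ to bound the second addend by $2\int_{0}^{1}|\ln h(r_t)|\,2r_t\,dr_t<\infty$.

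Combining the three bounds shows $|dq/di|$ is bounded on $[1,\infty)$, and hence $q$ is Lipschitz continuous there. I expect the inequality $-x^{m}\ln x\le 1/(e\,m)$ together with the handling of the factor $i$ to be the crux of the argument; the remaining items — justifying differentiation under the integral sign, the separate treatment of $i\in[1,2]$, and the integrability of $|\ln h|$ — are bookkeeping of the kind already carried out in Lemma~\ref{lem:lem1}.
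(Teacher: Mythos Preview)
Your overall strategy coincides with the paper's: compute $q'(i)$, split it into the two addends
\[
\int_{0}^{1}h(r_t)^{i-1}\,2r_t\,dr_t
\quad\text{and}\quad
i\int_{0}^{1}h(r_t)^{i-1}\ln h(r_t)\,2r_t\,dr_t,
\]
and bound each separately; the first addend is treated in essentially the same way in both. The genuine difference is the second addend. The paper only argues that $\ln h(r_t)$ is bounded on $[0,1]$ (because $h$ never vanishes there) and then asserts that ``after integrating over the bounded domain $[0,1]$ and multiplication by $i$ the term is still bounded'' --- without explaining how the leading factor $i$ is controlled as $i\to\infty$. You identify exactly this gap and close it with the sharp inequality $-x^{m}\ln x\le 1/(e\,m)$ for $x\in(0,1)$, which supplies the $1/(i-1)$ decay of the integral needed to cancel the factor $i$; your separate treatment of $i\in[1,2]$ then covers the range where that bound degenerates. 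In short, your argument follows the paper's template but is more careful at the one point where the paper's reasoning is incomplete, and your use of the inequality $-x^{m}\ln x\le 1/(e\,m)$ is precisely the missing ingredient if Lipschitz continuity is to hold on all of $[1,\infty)$.
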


\begin{proof}
The first derivative of $q(i)$ is:
\begin{flalign*}
q'(i)=&\int_{0}^{1}\bigg[1-\sqrt{z}r_t^2\arctan(\frac{1}{\sqrt{z}r_t^2})\bigg]^{(i-1)}2r_t dr_t\\
&+i\cdot\big(\int_{0}^{1}\bigg[1-\sqrt{z}r_t^2\arctan(\frac{1}{\sqrt{z}r_t^2})\bigg]^{(i-1)}\cdot \log{\big(1-\sqrt{z}r_t^2\arctan(\frac{1}{\sqrt{z}r_t^2})\big)}2r_t dr_t\big)\\
\end{flalign*}

For the first line we know $0\leq\arctan(x)\leq \frac{\pi}{2}~,~x\in\mathbb{R}_{\geq 0}$;
therefore for $i\geq 1$ the first factor is bounded, and by integrating over the bounded domain $[0,1]$, it will still be bounded in value.

For the second line, we again have the same term multiplied by:
\[\log{\big(1-\sqrt{z}r_t^2\arctan(\frac{1}{\sqrt{z}r_t^2})\big)}\]
which is independent of $i$. This term is unbounded (evaluates to $-\infty$) only when:
\[\big(1-\sqrt{z}r_t^2\arctan(\frac{1}{\sqrt{z}r_t^2})\big)=0\] 
\[\Longrightarrow ~~\arctan(\frac{1}{\sqrt{z}r_t^2})=\frac{1}{\sqrt{z}r_t^2}\]
\[\Longrightarrow~~~~~~~~~~~~~~~~~~~~~~~\frac{1}{\sqrt{z}r_t^2}=0.\]
But this is not the case here since $z$ is bounded and $0<r_t<1$. After integrating over the bounded domain $[0,1]$ and multiplication by $i$ the term is still bounded.
\end{proof}

In Fig.~\ref{fig:qi}, values numerically derived for formula (\ref{eqn:qi}) are shown when the number of signals is between 1 to 10, for $\beta=4$, $z=10$ and $\sigma_d=2$. For uniformly distributed nodes (the points are shown with `+'), the curve will stabilize over a fixed value ($\sim 0.2$) for large numbers. But when using log-normal spatial distribution to calculate the capture probability $q(i)$ (the points are shown with `$\circ$'), the curve goes to zero.

\begin{figure}[h]
\centering
\scalebox{0.85}{
\begin{tikzpicture}[y=.2cm, x=.7cm,font=\sffamily]
	\draw (0,0) -- coordinate (x axis mid) (10,0);
    	\draw (0,0) -- coordinate (y axis mid) (0,20);
    	\foreach \x in {0,...,10}
     		\draw (\x,1pt) -- (\x,-3pt)
			node[anchor=north] {\x};
    	\foreach \y in {0,0.2,0.4,0.6,0.8,1}
     		\draw (1pt,\y*20) -- (-3pt,\y*20) 
     			node[anchor=east] {\y}; 
	\node[below=0.8cm] at (x axis mid) {Number of nodes (i)};
	\node[rotate=90, above left=1.2cm] at (y axis mid) {Probability $q$};

	\draw plot[mark=*, mark options={fill=white}] 
		file {mortal.data};
	\draw plot[mark=+, mark options={fill=black} ] 
		file {immortal.data};

\end{tikzpicture}
}
\caption{The probability $q(i)$ for $0\leq i\leq 10$. The two set of points show two different patterns: $q(i)$ with log-normal spatial distribution \textup{($\circ$)} tends to $0$ for very large $i$, while with uniform distribution \textup{(+)} it converges to a fixed value ($\sim 0.2$). The lines show a possible interpolating function as a continuous approximation.} \label{fig:qi}
\end{figure}
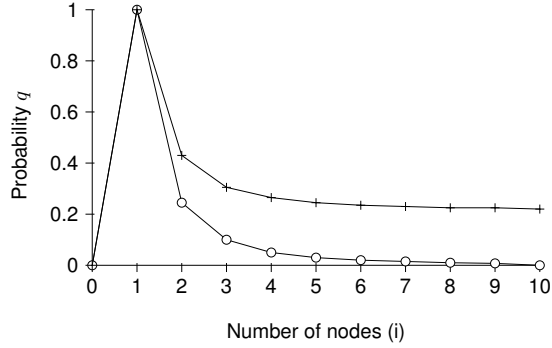

In the rest of this paper, function $q(i)$ is used over the domain $[0,\infty)$. So we assume that the value of $q(i)$ in the range $[0,1)$ is defined as: $q(i)=i$, since we know that for a single transmitting node (or less) there is no interference in the network. This also assures continuity, since $q(1)=1$ in both functions.

The following summarizes the results of this section in terms of the definition of the set $\mathcal{F}$ in PCTMC models for WSNs.

\begin{transformation}\label{prop1}
For component $\textit{Node}=(S, Act, \rightarrow, s_0)$, assume that the transitions in the set $\rightarrow$ are restricted to the following forms:
\begin{itemize}
\item $s_{i}\xrightarrow{(\text{capture}, r_{send})} s_{j}\in \rightarrow$ be ``successful send'' transitions,
\item $s_{i}\xrightarrow{(\text{failure}, r_{send})} s_{k}\in \rightarrow$ be ``failed send'' transitions, due to interference,
\item $s_{l}\xrightarrow{(\alpha, r_{\alpha})} s_{m}\in \rightarrow$ be any other transition.
\end{itemize}

Let $\textit{Network}\pepaeq(\textit{Node}_{1}^{(N)}~||\ldots||~\textit{Node}_{N}^{(N)})$ be a system made of $N$ identical components of type \textit{Node}. The set $\mathcal{F}^{(N)}$ of the associated PCTMC model $\hat{\chi}_{\textit{Network}}^{(N)}=(\hat{X}^{(N)}, \mathcal{F}^{(N)}, \hat{X}_0^{(N)})$ defined for $\hat{x}$, where $\hat{x} \in \hat{X}^{(N)}$ consists of the following tuples:
\begin{flalign*}
f_{\text{failure}_i}^{(N)}&=(\text{failure},~\frac{e_i}{N},~\frac{e_j}{N},~p_{send}\cdot (N\cdot\hat{x}_i-q(N\cdot\hat{x}_i)))&\\
f_{\text{capture}_i}^{(N)}&=(\text{capture},~\frac{e_i}{N},~\frac{e_k}{N},~p_{send}\cdot q(N\cdot\hat{x}_i))&\\
f_{\alpha_l}^{(N)}~~~~&=(\alpha,~\frac{e_l}{N},~\frac{e_m}{N},~r_{\alpha}\cdot N\cdot\hat{x}_l)&
\end{flalign*}
where $e_{i}$ is a unit vector with value $1$ at position $i$.
\end{transformation}

\begin{lemma}\label{lem:lem2}
Let $N\geq 1$ and function $q(i)$ be a Lipschitz continuous function over domain $i \in (0,\infty)$ and right-continuous at point $i=0$. The PCTMC model $\hat{\chi}_{\textit{Network}}^{(N)}=(\hat{X}^{(N)}, \mathcal{F}^{(N)}, \hat{x}_0^{(N)})$ is density-dependent.
\end{lemma}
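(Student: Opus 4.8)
The plan is to verify the two conditions of Definition~\ref{def:densitydep} directly, treating each of the three transition families $f_{\text{failure}_i}^{(N)}$, $f_{\text{capture}_i}^{(N)}$, $f_{\alpha_l}^{(N)}$ of Transformation~\ref{prop1} in turn; most of this is bookkeeping, with only two small points requiring care. For condition~1, I would observe that in every transition of Transformation~\ref{prop1} both $s_f^{(N)}$ and $t_f^{(N)}$ have the shape $e_\bullet/N$ for a unit vector $e_\bullet$, so $\nu_f^{(N)} = t_f^{(N)}-s_f^{(N)}$ equals $(e_j-e_i)/N$, $(e_k-e_i)/N$, $(e_m-e_l)/N$ for the three families respectively; multiplying by $N$ cancels the only occurrence of $N$, so $N\cdot\nu_f^{(N)}$ is $e_j-e_i$, $e_k-e_i$, $e_m-e_l$, independent of $N$.

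For condition~2, I would read off the candidate functions $g_f$ by dividing each rate function of Transformation~\ref{prop1} by $N$, namely $g_{\text{failure}_i}(\hat{x}) = p_{send}\big(\hat{x}_i - \tfrac1N q(N\hat{x}_i)\big)$, $g_{\text{capture}_i}(\hat{x}) = \tfrac{p_{send}}{N}\,q(N\hat{x}_i)$, and $g_{\alpha_l}(\hat{x}) = r_\alpha\hat{x}_l$, which satisfy $r_f^{(N)} = N\cdot g_f$ by construction; it then remains to show each is nonnegative, bounded, and Lipschitz continuous (with constants independent of $N$, as the mean-field theorem requires). First I would record two facts about $q$. From formula~(\ref{eqn:qi}) the bracketed inner integral lies in $[0,1]$, so $0\le q(i)\le i$ for $i\ge1$, and since $q(i)=i$ on $[0,1)$ this extends to $0\le q(i)\le i$ on all of $[0,\infty)$ — the bound $q(i)\le i$ being precisely what keeps the failure rate nonnegative. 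Secondly, the hypothesis gives a Lipschitz constant for $q$ on $(0,\infty)$; gluing this to the linear piece of Lipschitz constant $1$ on $[0,1)$ at the point $i=1$ (where $q(1)=1$ makes $q$ continuous) and using right-continuity at $0$, one obtains that $q$ is Lipschitz on all of $[0,\infty)$ with a single constant $C$, and in particular $q(i)=q(i)-q(0)\le Ci$.

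With these in hand the three properties follow quickly: nonnegativity holds because $\hat{x}_i\ge0$, $q\ge0$, and $q(N\hat{x}_i)\le N\hat{x}_i$; boundedness on the simplex because $g_{\alpha_l}\le r_\alpha$, $g_{\text{capture}_i}\le\tfrac{p_{send}}{N}\cdot CN\hat{x}_i\le p_{send}C$, and $g_{\text{failure}_i}\le p_{send}\hat{x}_i\le p_{send}$; and Lipschitz continuity because $g_{\alpha_l}$ is linear, $|g_{\text{capture}_i}(\hat{x})-g_{\text{capture}_i}(\hat{y})| = \tfrac{p_{send}}{N}|q(N\hat{x}_i)-q(N\hat{y}_i)| \le \tfrac{p_{send}}{N}\,C\,|N\hat{x}_i-N\hat{y}_i| = p_{send}C\,|\hat{x}_i-\hat{y}_i| \le p_{send}C\,\|\hat{x}-\hat{y}\|$, and $g_{\text{failure}_i} = p_{send}\hat{x}_i - g_{\text{capture}_i}$ is a difference of Lipschitz functions. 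I expect the only real obstacle to be this last estimate: it is essential that the $\tfrac1N$ rate-scaling built into Transformation~\ref{prop1} exactly cancels the $N$ inside $q(N\hat{x}_i)$, so that the Lipschitz constant (and likewise the bound) comes out independent of $N$ — without this cancellation density-dependency would fail; the only secondary subtlety is gluing $q$'s Lipschitz property across the breakpoint $i=1$ where its definition changes. Once both conditions of Definition~\ref{def:densitydep} are checked, $\hat\chi_{\textit{Network}}^{(N)}$ is density-dependent.
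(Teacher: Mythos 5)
Your proposal is correct and follows essentially the same route as the paper's proof: verify condition~1 by cancelling the $1/N$ in the state-change vectors, then identify the three $g_f$ functions and establish their Lipschitz continuity, with the key cancellation being that the $\tfrac1N$ prefactor absorbs the $N$ inside $q(N\hat{x}_i)$. In fact you are somewhat more careful than the paper, which does not explicitly address boundedness, nonnegativity (via $q(i)\le i$), or the $N$-independence of the Lipschitz constant.
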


\begin{proof} 
In order to prove density-dependency for $\hat{\chi}_{\textit{Network}}^{(N)}$ we check the conditions in Definition \ref{def:densitydep}:
\begin{enumerate}
\item The vectors:
\begin{flalign*}
N\cdot \nu_{f_{\text{capture}_i}^{(N)}}&= N\cdot (\frac{e_j}{N}-\frac{e_i}{N})\\
&=(e_j-e_i)~\text{,}\\
N\cdot \nu_{f_{\text{failure}_i}^{(N)}}&= N\cdot (\frac{e_k}{N}-\frac{e_i}{N})\\
&=(e_k-e_i)~\text{,}\\
N\cdot \nu_{f_{\alpha_l}^{(N)}}~~~&=N\cdot(\frac{e_m}{N}-\frac{e_l}{N})\\
&=(e_m-e_l)\\
\end{flalign*}
are independent of $N$.

\item As for the continuity criteria, we consider the $g_f$ functions below to be functions of the entries of the vector $\hat{X}^{(N)}$ and prove their Lipschitz continuity:
\begin{itemize}
\item[-] The function $g_{f_{\alpha_l}^{(N)}}=r_{\alpha} \cdot\hat{x}_l$ has the first derivative $\frac{\partial g_{f_{\alpha_l}^{(N)}}}{\partial \hat{x}_l}=r_{\alpha}$, where $r_\alpha$ is constant and therefore continuous and bounded. This means that $g_{f_{\alpha_l}^{(N)}}$ is Lipschitz continuous.

\item[-] For the function $g_{f_{\text{failure}_i}^{(N)}}=r_{send} \cdot\frac{1}{N}\cdot q(N\cdot\hat{x}_i)$,  
the term $r_{send} \cdot\frac{1}{N}$ would be a constant for a predetermined value of $N\geq 1$. As for the second part, we know that $0\leq\hat{x}_i\leq 1$. But then $0\leq N\cdot\hat{x}_i\leq N$ since $N\geq 1$ according to the assumptions. Again, based on the assumptions we know that $q(i)$ is Lipschitz continuous over domain $i\in[0,\infty)$. Therefore the term $g_{f_{\text{failure}_i}^{(N)}}$ is Lipschitz continuous.

\item[-] The function $g_{f_{\text{capture}_i}^{(N)}}=r_{send}\cdot (\hat{x}_i-\frac{1}{N}\cdot q(N\cdot\hat{x}_i))$, has two factors, the first of which is \[r_{send}\cdot \hat{x}_i\]
and is Lipschitz continuous like the first case.
And the second term 
\[\frac{r_{send}}{N}\cdot q(N\cdot\hat{x}_i)\]
is Lipschitz continuous following the second case.

\end{itemize}
\end{enumerate}
\end{proof}

\subsection{Slotted ALOHA with a single receiver}\label{subsec:simple}
In this part, we will consider a PCTMC model, built according to Transformation~\ref{prop1}, where every Node component in the system runs the Slotted ALOHA protocol. We consider the same scenario as in \cite{van1990stability} which consists of a number of senders scattered around a single antenna. The aim is to use the mean field approximation method to observe the bistable behavior of this specific ALOHA network.

The Slotted ALOHA protocol we consider here is expressed by the following set of rules \cite{abramson1970aloha,roberts1975aloha}:
\begin{itemize}
\item Whenever there is data to send, send it at the start of the next time-slot.
\item If the message could not be delivered due to interference, retry sending the message.
\end{itemize}
To this we also add the following restriction:
\begin{itemize}
\item While sending and retrying, do not generate new messages.
\end{itemize}

A node's behavior is presented in Fig.~\ref{fig:simple}. A node does its internal processing in state ($O$), and generates a new message with rate $r_o$. Next, the node enters a state where it transmits the message ($T$). While sending with rate $r_{send}$, if other nodes are also transmitting messages simultaneously, the signals will interfere. In case the message cannot be delivered, a node enters the backlog state ($R$), where it tries to retransmit the message after some time ($r_r$).

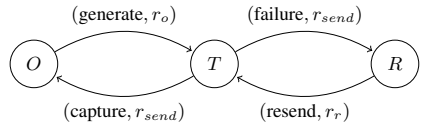
\begin{figure}[h]
\centering
\scalebox{0.8}{
\begin{tikzpicture}[shorten >=1pt,node distance=3cm,on grid,auto]
   \node[state] (q_0)   {$O$};
   \node[state] (q_1) [right=of q_0] {$T$};
   \node[state] (q_2) [right=of q_1] {$R$};
    \path[->]
    (q_0) edge [bend left] node {$(\text{generate}, r_{o})$} (q_1)
    (q_1) edge [bend left] node  {$(\text{capture}, r_{send})$} (q_0)
          edge [bend left]  node {$(\text{failure}, r_{send})$} (q_2)
    (q_2) edge [bend left]  node {$(\text{resend}, r_{r})$} (q_1);
\end{tikzpicture}
}
\caption{An ALOHA node's component transition system}
\label{fig:simple}
\end{figure}

Now we proceed to generate the set of differential equations associated with the slotted ALOHA network. Following the theory presented in \textit{section \ref{sec:method}}, the number of processes in state $s\in\{O,T,R\}$ at time $t$ is expressed by the function $\hat{x}_s(t)$, and following Transformation~\ref{prop1} we derive the set of equations in Table~\ref{tbl:simplealohanodes}.

\begin{table}[h]
\caption{System of ODEs for the Slotted ALOHA model}\label{tbl:simplealohanodes}
\fbox{
\begin{minipage}{15em}
\begin{flalign*}
\frac{d\hat{x}_{O}(t)}{dt}&=-r_o\cdot\hat{x}_{O}(t) + \frac{1}{N}\cdot r_{send}\cdot q(N\cdot\hat{x}_{T}(t)) &\\
\frac{d\hat{x}_{T}(t)}{dt}&=-r_{send}\cdot (\hat{x}_{T}(t)-\frac{1}{N}\cdot q(N\cdot\hat{x}_{T}(t))) - \frac{1}{N}\cdot r_{send}\cdot q(N\cdot\hat{x}_{T}(t)) &\\&+ r_o\cdot\hat{x}_{O}(t) + r_r\cdot\hat{x}_{R}(t) &\\
\frac{d\hat{x}_{R}(t)}{dt}&=-r_r\cdot\hat{x}_{R}(t) + r_{send}\cdot (\hat{x}_{T}(t)-\frac{1}{N}\cdot q(N\cdot \hat{x}_{T}(t)))
\end{flalign*}
\end{minipage}
}
\end{table}

Following \cite{van1990stability} we take $r_r=0.08$, $r_o=0.0055$ and $r_{send}=1$; since a message transmission always takes one timeslot. Specifying the initial condition as ($\hat{x}_O(0) = 1$, $\hat{x}_T(0) = 0$, $\hat{x}_R(0) = 0$), and solving the equations numerically (because the complexity of function $q(i)$ does not allow explicit solutions) we have the curves in Fig.~\ref{fig:na}. Here the curves show the number of nodes in each state, which change over time. After some changes, they stabilize over an \textit{equilibrium point}, which is the \textit{fixpoint} of the system of ODEs. The plot shows a network with a good behavior, in which once in a while a new message is generated and is almost always successfully delivered to the receiver in the first try ($\hat{x}_{R}(t)\sim 0$).

However, under different initial conditions, namely ($\hat{x}_O(0) = 0$, $\hat{x}_T(0) = 1$, $\hat{x}_R(0) = 0$), the solution is the curves in Fig.~\ref{fig:nb}. This plot shows that if the system starts in a state where everybody is trying to transmit a message, then it will be trapped in a state with a very low throughput where a constant number of nodes are always trying to deliver their messages and saturate the media.

In Fig.~\ref{fig:nc}, we see that in a less realistic case of traffic in a uniform spatially distributed network, the nodes tend to operate efficiently after some time despite the initial conditions.

\begin{figure}
\centering
\begin{subfigure}[b]{0.45\textwidth}
\includegraphics[scale=0.43]{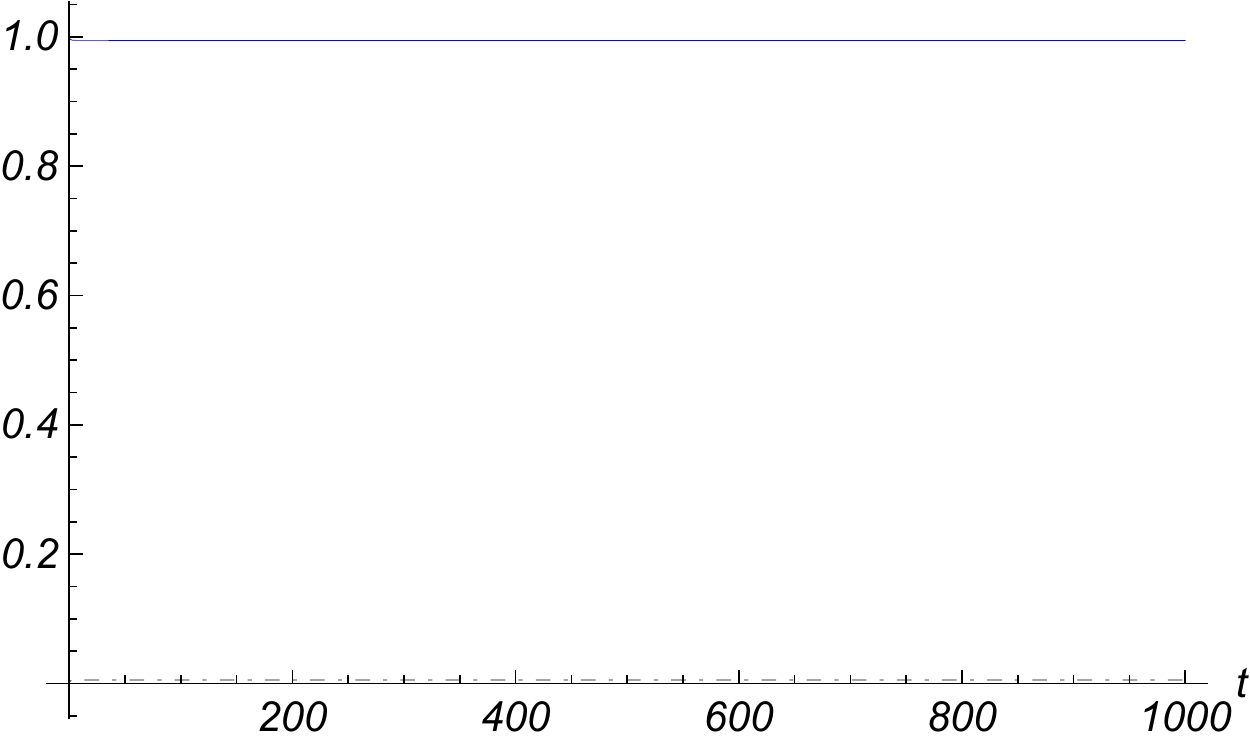}
\caption{The solution, with log-normal spatial distr. and with $\hat{x}_O(0) = 1$} \label{fig:na}
\end{subfigure}
\begin{subfigure}[b]{0.45\textwidth}
\includegraphics[scale=0.6]{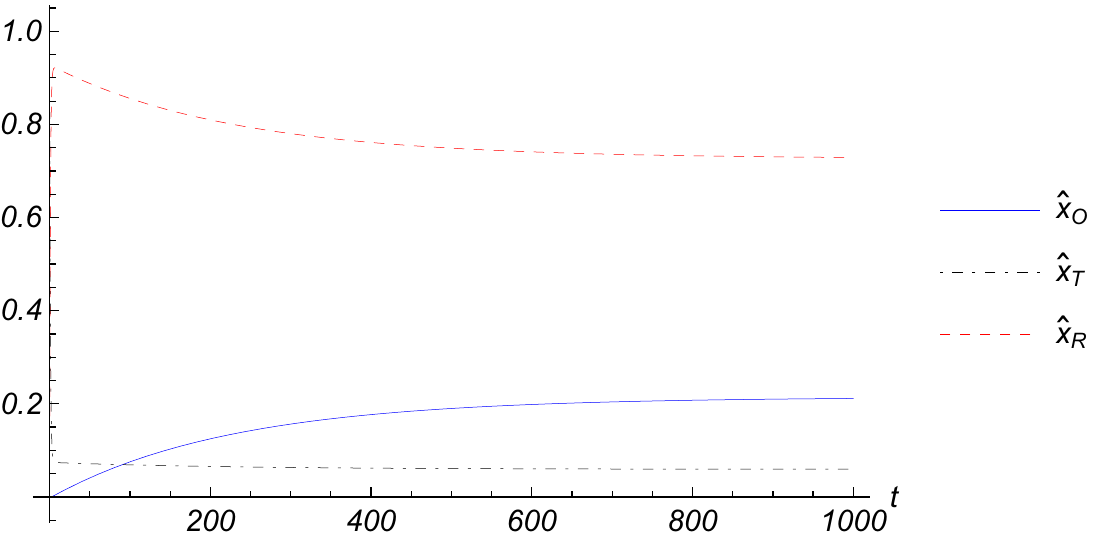}
\caption{The solution, with log-normal spatial distr. and with $\hat{x}_T(0) = 1$} \label{fig:nb}
\end{subfigure}
\begin{subfigure}[b]{0.45\textwidth}
\includegraphics[scale=0.6]{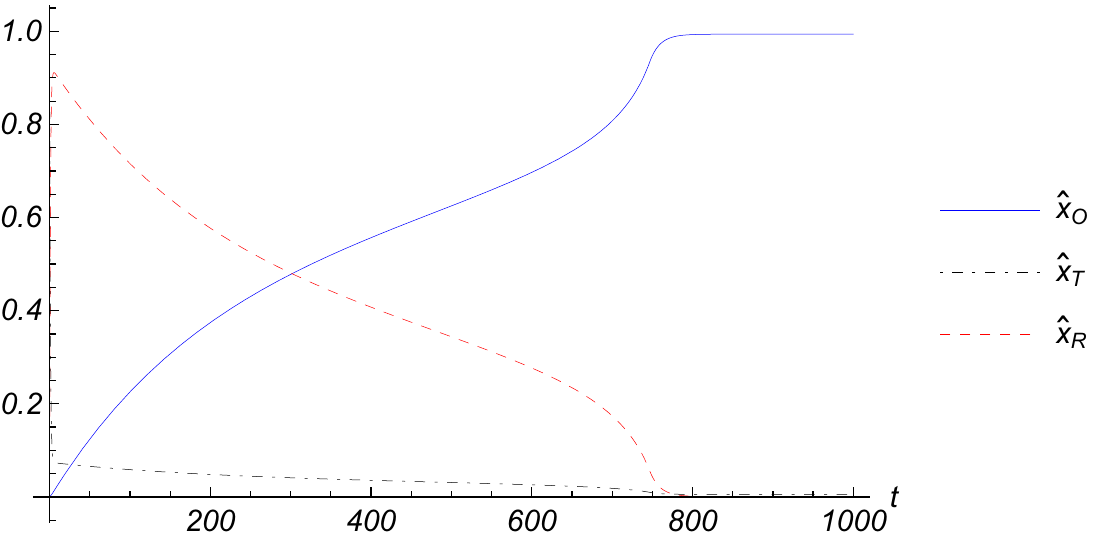}
\caption{The solution, with uniform spatial distr. and with $\hat{x}_T(0) = 1$} \label{fig:nc}
\end{subfigure}
\caption{Numerical solutions for the slotted ALOHA network} \label{fig:numericalresults}
\end{figure}

The bistability that we see in case of the log-normal spatial distribution is a common property of many ALOHA-type networks. The results match those presented in \cite{van1990stability}, and other observations on how ALOHA networks behave in real world \cite{rosenkrantz1983instability}.

\section{Modeling local broadcast in Slotted ALOHA networks}\label{sec:broadcast}
In this section we are first going to discuss local broadcast in wireless networks and how it can be modeled, and then apply the modeling approach to a network running a simple neighborhood discovery protocol.

\subsection{Local broadcast}\label{subsec:broadcast}
In essence, wireless network communication traffic consists of a number of broadcasts, in which every node in the vicinity of a sender is capable of hearing the message; so a message transmission typically involves one sender and several receivers.

The number of receiver parties depends on the technology of the radio modules, the properties of the media, the network topology, etc. Here, we consider a single measure $p$ to represent all these properties. The probability $p$ is the fraction of nodes that are close enough to the transmitter to be able to successfully receive the message, and usually has the form $p=\frac{d}{N}$, where $d$ is some constant number of nodes which are located in a certain neighborhood.

For this communication, two terms appear in ODEs; one for the receivers and one for senders. For the receiver, we start with the term used in \cite{bradley2008analysing} to describe communication when there is only one receiver. We call this $M_{\alpha}$, which has the general form:
\begin{equation}\label{eqn:singlereceive}
M_{\alpha}=r_{\alpha}\cdot N_{s}\cdot \textbf{1}\{N_r>0\}.
\end{equation}

Where $r_\alpha$ is the rate of send action $\alpha$, $N_{s}$ stands for the total number of processes in states in which a send action $\alpha$ is possible, $N_r$ is the total number of processes in states in which a receive action is possible, and the indicator function $\textbf{1}\{N_r>0\}$ shows the fact that communication is not possible when there are no receivers.

Instead, in broadcasting we reintroduce the term $M_{\alpha}$ in (\ref{eqn:singlereceive}) as term $\Pi_{\alpha}$:
\begin{equation}\label{eqn:receive}
\Pi_{\alpha}=r_{\alpha}\cdot N_{s}\cdot (p\cdot N_{r}).
\end{equation}
since at each point in time we know that a portion of all receivers that are within range (expressed by parameter $p$) are capable of receiving the message. 

$\Pi_{\alpha}$ in equation (\ref{eqn:receive}) can be derived by rewriting the term $M_{\alpha}=r_{\alpha}\cdot N_{s}\cdot \textbf{1}\{N_r>0\}$ in a setting with multiple receivers. In order to show this, we first describe $\mathbb{P}\{M=i|N_r\}$ as the probability of $i$ successful receives, when there are a total of $N_r$ receivers:
\begin{equation}\label{eq:pmnr}
\mathbb{P}\{M=i|N_{r}\}=\dbinom{N_{r}}{i}p^i (1-p)^{N_{r}-i}.
\end{equation}
Because we know that any subset of the possible receivers are capable of participating in the wireless communication, we extend term (\ref{eqn:singlereceive}) and derive:
\[
\Pi_{\alpha}=\sum_{i=1}^{N} \mathbb{P}\{M=i|N_{r}\}\cdot i\cdot r_{\alpha}\cdot N_{s}\cdot \textbf{1}\{N_r\geq i\}
.\]

Where $N$ is the total number of nodes in the network. Replacing the term $\mathbb{P}\{M=i|N_{r}\}$ with the right hand side of (\ref{eq:pmnr}) we have:
\begin{align*}
\Pi_{\alpha}&=~\sum_{i=0}^{N_{r}}\big(\dbinom{N_{r}}{i}p^i (1-p)^{N_{r}-i}\big)\cdot i\cdot r_{\alpha}\cdot N_{s}\\
&=~r_{\alpha}\cdot N_{s}\cdot \big(\sum_{i=0}^{N_{r}}i \dbinom{N_{r}}{i}p^i (1-p)^{N_{r}-i}\big).\\
\end{align*}

The series is the expected value of a binomial distribution; therefore the receive term is simply:
\[\Pi_{\alpha}=r_{\alpha}\cdot N_{s}\cdot (p\cdot N_{r}).\]

Next, we see how results presented regarding interfering signals can be combined with local broadcasting and help us derive the term $R_{\alpha}$. Term (\ref{eqn:receive}) can be interpreted differently, as a portion of senders ($p\cdot N_{r}$) which are within range of each receiver, and try to capture it:
\[r_{\alpha}\cdot (p\cdot N_{s})\cdot N_{r}.\]

And then we apply the function $q(i)$ to the total number of received messages at each receiver's site:
\begin{equation}\label{eqn:receivecol}
R_{\alpha}=r_{\alpha}\cdot q(p\cdot N_{s})\cdot N_{r}.
\end{equation}
This is the term we intend to use for receivers. As for the senders, since a node which is sending a message type $\alpha$ does not depend on the status of receivers, we have the following simple format:
\begin{equation}\label{eqn:send}
T_{\alpha}=r_{\alpha}\cdot N_{s}.
\end{equation}
This means that when transmitting, the sender does not depend on the number of receivers in its vicinity. 

In the following, all send or receive actions have the form $send(m)$ or $receive(m)$ and are accompanied by a type $m\in M$, where $M$ is a set of message types. We define the set $I_{m}$ to contain send actions that are interfering with send actions with message type $m$, where at least $send(m)\in I_{m}$.

\begin{transformation}\label{prop2}
For component $\textit{Node}=(S, Act, \rightarrow, s_0)$, assume that the transitions in the set $\rightarrow$ are restricted to the following forms:
\begin{itemize}
\item $s_{i}\xrightarrow{(\text{send($m$)}, r_{\text{send}})} s_{j}\in \rightarrow$ be send transitions of message type $m$,
\item $s_{k}\xrightarrow{(\text{receive($m$)},\bot)} s_{l}\in \rightarrow$ be associated receive transitions of message type $m$,
\item $s_{m}\xrightarrow{(\alpha, r_{\alpha})} s_{o}\in \rightarrow$ be any other transition.
\end{itemize}
We forbid the component Node to be able to both send and receive a message type $m$ when in a state $s_i\in S$.

Let $\textit{Network}\pepaeq(\textit{Node}_{1}^{(N)}~||\ldots||~\textit{Node}_{N}^{(N)})$ be a system made of $N$ identical \textit{Node} components. Let $\hat{\chi}_{\textit{Network}}^{(N)}=(\hat{X}^{(N)}, \mathcal{F}^{(N)}, \hat{x}_0^{(N)})$ be the associated PCTMC model.

Consider $C(A)$ to be the portion of nodes that are in a state in which they are capable of doing the actions of set $A$, where for $\hat{x}^{(N)}_i \in \hat{X}^{(N)}$, it is defined as follows:
\begin{flalign*}
C(A)=\sum_{i \in \{a|\exists s_{b}.(s_{a}\xrightarrow{(\text{$\alpha$}, r_{\alpha})} s_{b}\wedge \alpha \in A)\}}\hat{x}_{i}^{(N)}
\end{flalign*}
The set $\mathcal{F}^{(N)}$ consists of the following tuples (again for $\hat{x}$ defined above):
\begin{align*}
f_{\text{send($m$)}_i}^{(N)}~~&=(\text{send($m$)},~\frac{e_i}{N},~\frac{e_j}{N},~N\cdot r_{send}\cdot \hat{x}_i)&\\
f_{\text{receive($m$)}_k}^{(N)}&=(\text{receive($m$)},~\frac{e_k}{N},~\frac{e_l}{N}, &\\
&~~~~~N\cdot r_{send}\cdot \frac{C(\{send(m)\})}{C(I_{m})}\cdot q(N\cdot p\cdot C(\{send(m)\}))\cdot\hat{x}_k) &\\ 
f_{\alpha_m}^{(N)}~~~~~~~&=(\alpha,~\frac{e_m}{N},~\frac{e_o}{N},~N\cdot r_{\alpha}\cdot\hat{x}_m)&
\end{align*}
where $e_{i}$ is a unit vector with value $1$ at position $i$.
\end{transformation}

It is worth noting that the term $N\cdot p=N\cdot \frac{d}{N}=d$, and therefore $f_{\text{receive($m$)}_k}^{(N)}$ does not depend on $N$.

\begin{lemma}\label{lem:lem3}
Let $N\geq 1$ and function $q(i)$ to be a Lipschitz continuous function over domain $i \in (0,\infty)$ and right-continuous at point $i=0$. The PCTMC model $\hat{\chi}_{\textit{Network}}^{(N)}=(\hat{X}^{(N)}, \mathcal{F}^{(N)}, \hat{x}_0^{(N)})$ is density-dependent.
\end{lemma}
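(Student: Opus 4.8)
The plan is to mirror the proof of Lemma~\ref{lem:lem2} and verify the two clauses of Definition~\ref{def:densitydep} for each of the three families of transitions that Transformation~\ref{prop2} puts into $\mathcal{F}^{(N)}$. Clause~1 is immediate: every such transition has source and target vectors of the form $e_a/N$ and $e_b/N$, so its state-change vector is $\nu_f^{(N)} = (e_b - e_a)/N$ and $N\cdot\nu_f^{(N)} = e_b - e_a$ does not depend on $N$.

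For clause~2 we read the candidate $g_f$ off each rate function (stripping the leading factor $N$) and check it is Lipschitz continuous and bounded on $\hat X^{(N)}$. For $f_{\text{send($m$)}_i}^{(N)}$ we get $g_f = r_{send}\,\hat x_i$ and for $f_{\alpha_m}^{(N)}$ we get $g_f = r_\alpha\,\hat x_m$; each is affine in a single coordinate, hence has constant derivative and is bounded on $[0,1]$, exactly as the first bullet of Lemma~\ref{lem:lem2}'s proof. The only non-trivial case is the receive transition, where
\[g_f \;=\; r_{send}\cdot\frac{C(\{send(m)\})}{C(I_m)}\cdot q\bigl(N p\cdot C(\{send(m)\})\bigr)\cdot\hat x_k .\]
First I would use the remark following Transformation~\ref{prop2} that $N p = d$ is a constant, so the argument of $q$ is $d\cdot C(\{send(m)\})$, already independent of $N$. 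Since each $C(A)$ is a sum of coordinates of $\hat x$, it is a linear (hence Lipschitz and $[0,1]$-valued) function of $\hat x$; $q$ is Lipschitz and bounded ($q\le 1$) on $[0,\infty)$ by the hypothesis, so $\hat x\mapsto q(d\cdot C(\{send(m)\}))$ is Lipschitz and bounded, and likewise $\hat x_k$ is Lipschitz and bounded. As products of bounded Lipschitz functions are Lipschitz, it remains only to control the quotient $C(\{send(m)\})/C(I_m)$.

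That quotient is the main obstacle, since it is a priori undefined where $C(I_m)=0$, i.e.\ on configurations where no node sits in a state able to transmit any message interfering with $m$. The saving observation is that $send(m)\in I_m$, so the index set summed by $C(\{send(m)\})$ is contained in the one summed by $C(I_m)$, giving $0\le C(\{send(m)\})\le C(I_m)\le 1$ everywhere. Hence the quotient lies in $[0,1]$ wherever $C(I_m)>0$, and where $C(I_m)=0$ we also have $C(\{send(m)\})=0$, so $q(d\cdot C(\{send(m)\}))=q(0)=0$ (recall $q(i)=i$ on $[0,1)$) and the product $\tfrac{C(\{send(m)\})}{C(I_m)}\,q(d\cdot C(\{send(m)\}))$ is extended by $0$ there. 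Writing $c=C(\{send(m)\})$, $D=C(I_m)$ and $\phi(c,D)=\tfrac cD\,q(dc)$ on the convex region $\{0\le c\le D\le 1\}$ with $\phi(0,0)=0$, one shows that the $1/D$ singularity is absorbed: the Lipschitz bound $q(dc)\le L_q\,d\,c$ together with $c\le D$ gives $|\phi|\le L_q d$, continuity at the corner, and a bounded difference quotient throughout the region, so $\phi$ is Lipschitz there. Composing with the linear maps $\hat x\mapsto(c(\hat x),D(\hat x))$ and multiplying by the constant $r_{send}$ and the bounded Lipschitz coordinate $\hat x_k$ then yields that $g_f$ is Lipschitz and bounded, completing clause~2 and hence the proof. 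The delicate point is precisely the estimate near $c=D=0$, which is exactly where the inclusion $send(m)\in I_m$ (hence $c\le D$) is needed.
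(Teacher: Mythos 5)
Your verification of clause~1 and of the affine rates ($g_f=r_{send}\hat x_i$ and $g_f=r_\alpha\hat x_m$) coincides with the paper's, but for the receive transition you take a genuinely different, and in fact more careful, route. The paper establishes the pointwise bound $0\le g_{f_{\text{receive}}}\le r_{send}\cdot\hat x_k$ (from $C(\{send(m)\})\le C(I_m)$ and $q\le 1$) and then asserts that it ``suffices'' to show the majorant $r_{send}\hat x_k$ is Lipschitz; since a function dominated by a Lipschitz function need not itself be Lipschitz (compare $\sqrt{x}\le 1$ on $[0,1]$), that step leaves the actual difficulty untouched. You identify that difficulty precisely: the quotient $C(\{send(m)\})/C(I_m)$ is singular where $C(I_m)=0$, and you remove the singularity by studying $\phi(c,D)=(c/D)\,q(dc)$ on the convex region $0\le c\le D\le 1$, using $q(0)=0$, the Lipschitz constant $L_q$ of $q$, and the inclusion $send(m)\in I_m$ (hence $c\le D$). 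The step you leave as ``one shows'' does go through: on $D>0$ one has $|\partial_D\phi|\le L_q d\,c^2/D^2\le L_q d$ and $|\partial_c\phi|\le 2L_q d$, and $|\phi|\le L_q d\,c$ gives continuity at the corner $(0,0)$, so convexity of the region yields a global Lipschitz bound; composing with the linear maps $\hat x\mapsto(c,D)$ and multiplying by the bounded Lipschitz factor $r_{send}\hat x_k$ finishes the case. The price of your route is the extra use of $q(0)=0$, which is guaranteed by the paper's convention $q(i)=i$ on $[0,1)$ but is not among the lemma's stated hypotheses; the gain is that your argument is complete where the paper's is not. The remaining ingredients --- $Np=d$ being constant, linearity of $C(A)$ in $\hat x$, and closure of bounded Lipschitz functions under products --- match the intended argument.
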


\begin{proof}
We check the conditions in Definition \ref{def:densitydep}:

\begin{enumerate}
\item The vectors:
\begin{flalign*}
N\cdot\nu_{f_{\text{send($m$)}_i}^{(N)}}~~&= N\cdot (\frac{e_j}{N}-\frac{e_i}{N})\\
&=e_j-e_i~\text{,}\\
N\cdot\nu_{f_{\text{receive($m$)}_k}^{(N)}}&= N\cdot (\frac{e_l}{N}-\frac{e_k}{N})\\
&=e_l-e_k~\text{,}\\
N\cdot\nu_{f_{\alpha_m}^{(N)}}~~~~~~~&=N\cdot (\frac{e_o}{N}-\frac{e_m}{N})\\
&=e_o-e_m\\
\end{flalign*}
are independent of $N$.

\item As for the continuity criteria, we consider the $g_f$ functions below to be functions over vectors $\hat{x}\in\hat{X}^{(N)}$ and prove their Lipschitz continuity:
\begin{itemize}
\item[-] The function $g_{f_{\alpha_m}^{(N)}}=r_{\alpha} \cdot\hat{X}^{(N)}_m$ has the first derivative $\frac{\partial g_{f_{\alpha_m}^{(N)}}}{\partial \hat{X}^{(N)}_m}=r_{\alpha}$, where $r_\alpha$ is continuous and bounded. Therefore $g_{f_{\alpha_m}^{(N)}}$ is Lipschitz continuous.

\item[-] For the function 
\[g_{f_{\text{receive}_k}^{(N)}}=r_{send}\cdot \frac{C(\{send(m)\})}{C(I_{m})}\cdot q(N\cdot p\cdot C(\{send(m)\}))\cdot\hat{x}_i\]

We first make an observation. We know that 
\[C(A)=\sum_{i\in B}\hat{x}_{i}\]
where $B\subseteq \{1,\ldots,n\}$. Since $send(m)\in I_m$ we always have: 
\[C(\{send(m)\})\leq C(I_m)\]

Next, we show that for two $n$-dimensional vectors $\overrightarrow{x}$ and $\overrightarrow{y}$ in $\hat{X}^{(N)}$, there is a constant real number $c$ for which:
\[|g_{f_{\text{receive}_k}^{(N)}}(\overrightarrow{x}) - g_{f_{\text{receive}_k}^{(N)}}(\overrightarrow{y})|\leq c\cdot | \overrightarrow{x} - \overrightarrow{y} |\]

holds, where $| \overrightarrow{x} - \overrightarrow{y} |$ should be interpreted as the distance between the endpoints of the two vectors. Returning to the observation that we made at the start of the proof, and the fact that $q(i)$ is a probability, we have:
\[\frac{C(\{send(m)\})}{C(I_{m})}\leq 1\]
and,
\[q(N\cdot p\cdot C(\{send(m)\}))\leq 1\]

Therefore:
\[\frac{C(\{send(m)\})}{C(I_{m})}\cdot q(N\cdot p\cdot C(\{send(m)\}))\leq 1\]

and multiplying the two sides by $r_{send}\cdot \hat{x}_i$ we have:
\[r_{send}\cdot \hat{x}_i\cdot \frac{C(\{send(m)\})}{C(I_{m})}\cdot q(N\cdot p\cdot C(\{send(m)\}))\leq r_{send}\cdot \hat{x}_i\]

So in order to establish the main result it suffices to show that:
\[|r_{send}\cdot\overrightarrow{x}_i - r_{send}\cdot\overrightarrow{y}_i|\leq c\cdot | \overrightarrow{x} - \overrightarrow{y} |\]

for any positive $r_{send}$ we have:
\[|r_{send}\cdot \overrightarrow{x}_i - r_{send}\cdot \overrightarrow{y}_i|= r_{send}\cdot |\overrightarrow{x}_i - \overrightarrow{y}_i|\]

and since by definition the distance must have the following property:
\[|\overrightarrow{x}_i - \overrightarrow{y}_i|\leq | \overrightarrow{x} - \overrightarrow{y} |\]

we have:
\[r_{send}\cdot|\overrightarrow{x}_i - \overrightarrow{y}_i|\leq r_{send}\cdot | \overrightarrow{x} - \overrightarrow{y} |\]

by transitivity of the inequality relation $\leq$ we have shown that at least for $c=r_{send}$:
\[|g_{f_{\text{receive}_k}^{(N)}}(\overrightarrow{x}) - g_{f_{\text{receive}_k}^{(N)}}(\overrightarrow{y})|\leq r_{send}\cdot | \overrightarrow{x} - \overrightarrow{y} |\]

The inequality holds and the function $g_{f_{\text{receive}_k}^{(N)}}$ is Lipschitz continuous.

\item[-] The function $g_{f_{\text{send}_i}^{(N)}}=r_{send}\cdot \hat{x}_i$, is Lipschitz continuous following the proof for the first case.
\end{itemize}
\end{enumerate}
\end{proof}

\subsection{Neighborhood discovery protocol}\label{subsec:multiple}
In this part we will study a slightly different version of the neighborhood discovery protocol in an ALOHA network. The discovery works as follows: every once in a while each node broadcasts a HELLO message to advertise its presence in the network. All the neighbors hearing this will respond with an acknowledgement. The sender follows a passive acknowledgement model, and upon receiving an acknowledgement which ensures its discovery by at least one neighbor, proceeds with its internal processing. Such protocols are essential building blocks of many algorithms such as routing in wireless ad hoc networks.

We consider every node to run an identical implementation of the neighborhood discovery protocol. An abstract transition system for every node's behavior is given in Fig.~\ref{fig:alohanode}.
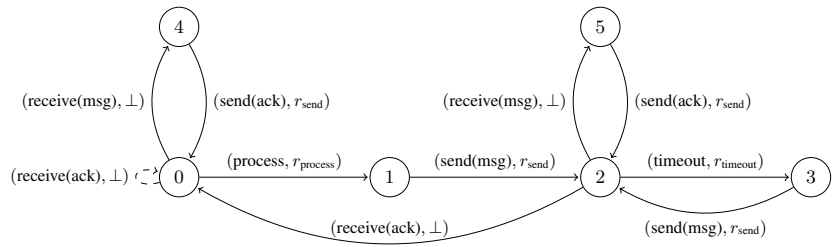
\begin{figure}[h]
\centering
\scalebox{0.8}{
\begin{tikzpicture}[shorten >=1pt,node distance=3.5cm,on grid, auto, every state/.style={inner sep=4pt,minimum size=3pt}]
   \node[state] (q_1) {$1$};
   \node[state] (q_0) [left=of q_1]  {$0$};
   \node[state] (q_2) [right=of q_1] {$2$};
   \node[state] (q_3) [above=2.5cm of q_2] {$5$};
   \node[state] (q_4) [above=2.5cm of q_0] {$4$};
   \node[state] (q_5) [right=of q_2] {$3$};
    \path[->]
    (q_0) edge  node {\scalebox{0.9}{$(\text{process},r_{\text{process}})$}} (q_1)
          edge [bend left] node {\scalebox{0.9}{$(\text{receive(msg)},\bot)$}} (q_4)
          edge [loop left, dashed] node {\scalebox{0.9}{$(\text{receive(ack)},\bot)$}} (q_0)
    (q_1) edge  node {\scalebox{0.9}{$(\text{send(msg)},r_{\text{send}})$}} (q_2)
    (q_2) edge [bend left] node [swap] {\scalebox{0.9}{$(\text{receive(ack)},\bot)$}} (q_0) 
          edge [bend left] node {\scalebox{0.9}{$(\text{receive(msg)},\bot)$}} (q_3)
          edge node {\scalebox{0.9}{$(\text{timeout}, r_{\text{timeout}})$}} (q_5)
    (q_3) edge [bend left] node {\scalebox{0.9}{$(\text{send(ack)},r_{\text{send}})$}} (q_2)
    (q_4) edge [bend left] node {\scalebox{0.9}{$(\text{send(ack)},r_{\text{send}})$}} (q_0)
    (q_5) edge [bend left] node {\scalebox{0.9}{$(\text{send(msg)},r_{\text{send}})$}} (q_2);
\end{tikzpicture}
}
\caption{Component transition system of ALOHA nodes in the neighborhood discovery protocol.} \label{fig:alohanode}
\end{figure}

In this transition system multiple assumptions and considerations are made. Receiving a message is allowed at any point in time when a node is not busy sending a message. Therefore when a node is doing its internal process, or when it is waiting for an acknowledgement, it responds to any message that is received in the mean-time.

Timeout, which is an internal action, is taken to have a fixed duration ($\mathcal{T}$) and the rate of the timeout is defined accordingly as ($r_{\textit{timeout}}=\frac{1}{\mathcal{T}}$). Nodes do other processing in between sending messages, which also happens after a fixed $\frac{1}{r_{\textit{process}}}$ time. A node which has a good performance spends most of its time processing in state $0$, instead of trying to send a message.

Based on the transition system in Fig.~\ref{fig:alohanode}, and also that $I_{msg}=\{msg, ack\}$ and $I_{ack}=\{msg, ack\}$, we derive the system of ODEs according to the recipe provided in Transformation~\ref{prop2}. The result is presented in Table~\ref{tbl:alohanodes}.

\begin{table}[h]
\caption{System of ODEs derived from the transition system in Fig. \ref{fig:alohanode}. $r_{ack}$ and $r_{msg}$ are only used for clarity and are otherwise equal to $r_{send}$.}\label{tbl:alohanodes}
\fbox{
\begin{minipage}{15em}
\begin{flalign*}
\frac{d\hat{x}_{0}}{dt}&= -r_{process}\cdot \hat{x}_{0} + r_{ack}\cdot \hat{x}_{4}&\\
& - r_{msg}\cdot(\frac{\hat{x}_{1}+\hat{x}_{3}}{\hat{x}_{1}+\hat{x}_{3}+\hat{x}_{4}+\hat{x}_{5}})\cdot q(N\cdot p\cdot(\hat{x}_{1}+\hat{x}_{3}+\hat{x}_{4}+\hat{x}_{5}))\cdot\hat{x}_{0}&\\
& + r_{ack}\cdot(\frac{\hat{x}_{4}+\hat{x}_{5}}{\hat{x}_{1}+\hat{x}_{3}+\hat{x}_{4}+\hat{x}_{5}})\cdot q(N\cdot p\cdot(\hat{x}_{1}+\hat{x}_{3}+\hat{x}_{4}+\hat{x}_{5}))\cdot\hat{x}_{2} &\\
\frac{d\hat{x}_{1}}{dt}&= -r_{msg}\cdot\hat{x}_{1}+ r_{process}\cdot \hat{x}_{0} &\\
\frac{d\hat{x}_{2}}{dt}&= - r_{timeout}\cdot \hat{x}_{2}+r_{msg}\cdot \hat{x}_{1}+r_{msg}\cdot \hat{x}_{3}+r_{ack}\cdot\hat{x}_{5}&\\
& - r_{msg}\cdot(\frac{\hat{x}_{1}+\hat{x}_{3}}{\hat{x}_{1}+\hat{x}_{3}+\hat{x}_{4}+\hat{x}_{5}})\cdot q(N\cdot p\cdot(\hat{x}_{1}+\hat{x}_{3}+\hat{x}_{4}+\hat{x}_{5}))\cdot \hat{x}_{2}&\\
& - r_{ack}\cdot(\frac{\hat{x}_{4}+\hat{x}_{5}}{\hat{x}_{1}+\hat{x}_{3}+\hat{x}_{4}+\hat{x}_{5}})\cdot q(N\cdot p\cdot(\hat{x}_{1}+\hat{x}_{3}+\hat{x}_{4}+\hat{x}_{5}))\cdot \hat{x}_{2}&\\
\frac{d\hat{x}_{3}}{dt}&=r_{timeout}\cdot \hat{x}_{2} - r_{msg}\cdot\hat{x}_{3}&\\
\frac{d\hat{x}_{4}}{dt}&=-r_{ack}\cdot \hat{x}_{4}&\\
& + r_{msg}\cdot(\frac{\hat{x}_{1}+\hat{x}_{3}}{\hat{x}_{1}+\hat{x}_{3}+\hat{x}_{4}+\hat{x}_{5}})\cdot q(N\cdot p\cdot(\hat{x}_{1}+\hat{x}_{3}+\hat{x}_{4}+\hat{x}_{5}))\cdot \hat{x}_{0}&\\
\frac{d\hat{x}_{5}}{dt}&=-r_{ack}\cdot \hat{x}_{5}&\\
& + r_{msg}\cdot(\frac{\hat{x}_{1}+\hat{x}_{3}}{\hat{x}_{1}+\hat{x}_{3}+\hat{x}_{4}+\hat{x}_{5}})\cdot q(N\cdot p\cdot(\hat{x}_{1}+\hat{x}_{3}+\hat{x}_{4}+\hat{x}_{5}))\cdot \hat{x}_{2}&\\
\end{flalign*}
\end{minipage}
}
\end{table}

We take the parameters ($r_{send}=100$, $r_{\mathit{process}}=1$, $r_{\mathit{timeout}}=30$) and $p=0.05$ for the connectivity of links in the network. We take $N=500$, and therefore we have the number of nodes in a neighborhood $d=N\cdot p=25$. By solving the system of equations for the initial conditions ($\hat{x}_{0}(0)=1$) and then for ($\hat{x}_{3}(0)=1$), we see a bistable behavior, as can be seen in Fig.~\ref{fig:ma} and \ref{fig:mb}. 
 
\begin{figure}[h]
\centering
\begin{subfigure}{0.45\textwidth}
\includegraphics[scale=0.43]{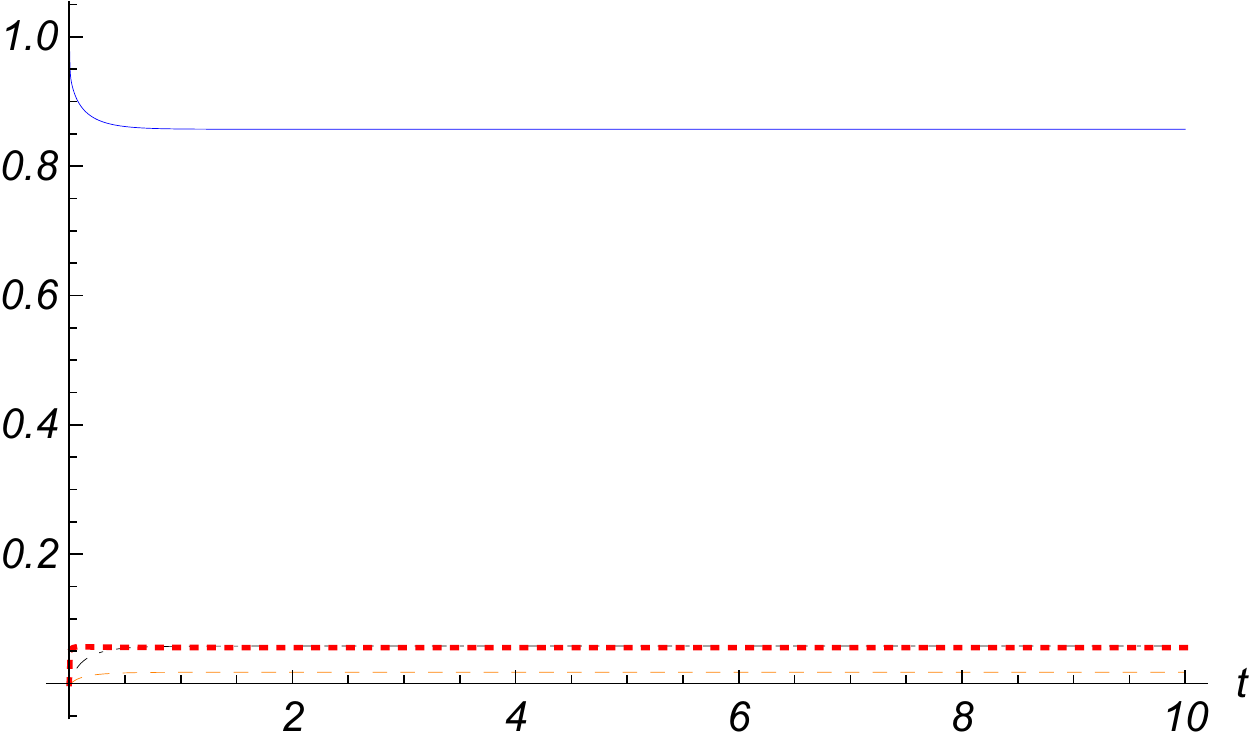}
\caption{Solution with $\hat{x}_0=1$} \label{fig:ma}
\end{subfigure}
\begin{subfigure}{0.45\textwidth}
\includegraphics[scale=0.6]{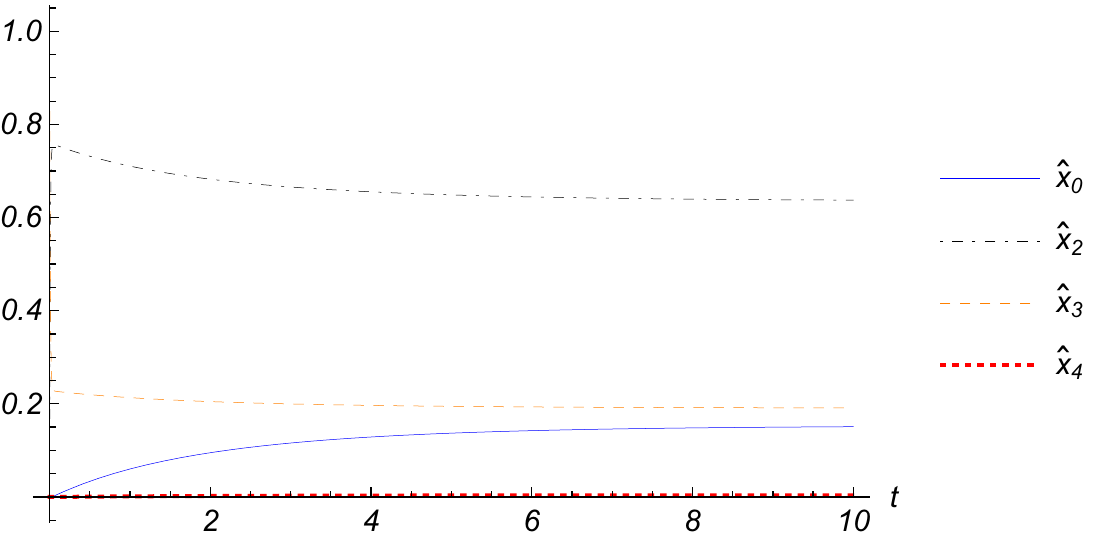}
\caption{Solution with $\hat{x}_3=1$} \label{fig:mb}
\end{subfigure}
\caption{The bistable behavior in an ALOHA network with multiple receivers} \label{fig:multireceiver}
\end{figure}
\section{Stability and Vector Field Analysis}\label{sec:vector}
A common method to study stability in Markov counting processes has been the calculation of \textit{drifts}. However, as the model grows, observing drifts in differential equations is not straightforward. Therefore we use vector fields to study the equilibrium points.

For equations in Table~\ref{tbl:simplealohanodes}, a corresponding two dimensional vector-field has been given in Fig~\ref{fig:va}. Here, since we know that $\hat{x}_{O}+\hat{x}_{T}+\hat{x}_{R}=1$, only the triangular area at the bottom of the plane is filled, where $\hat{x}_{O}+\hat{x}_{R}\leq 1$. The value of $\hat{x}_{T}$ at each point in this area is implicitly defined as $\hat{x}_{T}=1-\hat{x}_{O}-\hat{x}_{R}$. The vectors in this figure are of two colors, and show the tendency of the system starting from that point (as the initial condition) to go in any two directions: the red part consists of points which go to the first fixpoint, and the blue part consists of points which go to the second one.

For equations in Table~\ref{tbl:alohanodes}, Fig.~\ref{fig:vb} shows a similar pattern. Here, since there are 6 varying parameters $\hat{x}_{0}$ through $\hat{x}_{5}$, we only present two values for each point, one for parameter $\hat{x}_{0}$ which is the original state and the other for $\hat{x}_{3}$, the state in which a node retries sending a messages. Also we take $\hat{x}_{0}+\hat{x}_{3}\leq 1$, and therefore the rest of the parameters may take positive values which satisfies the equation $\hat{x}_{1}+\hat{x}_{2}+\hat{x}_{4}+\hat{x}_{5}=1-\hat{x}_{0}+\hat{x}_{3}$. For each point, Fig.~\ref{fig:vb} shows the average tendency to go to either of the two fixpoints, where white areas go to the first fixpoint and black areas go to the second one. For areas in which it is possible to go to both fixpoints for various values of the 4 absent parameters, the color will turn out to be different shades of gray, depending on the proportion of cases which go to each fixpoint.

In most systems, avoiding a bistable behavior improves the predictability of the system. However, when bistability is inevitable, a system's conditions can be monitored for signs that make it prone to slipping into a state with low performance, e.g. when the system in Fig.~\ref{fig:vb} is in the gray area.

\begin{figure}[h]
\centering
\begin{subfigure}{0.46\textwidth}
\includegraphics[scale=.44]{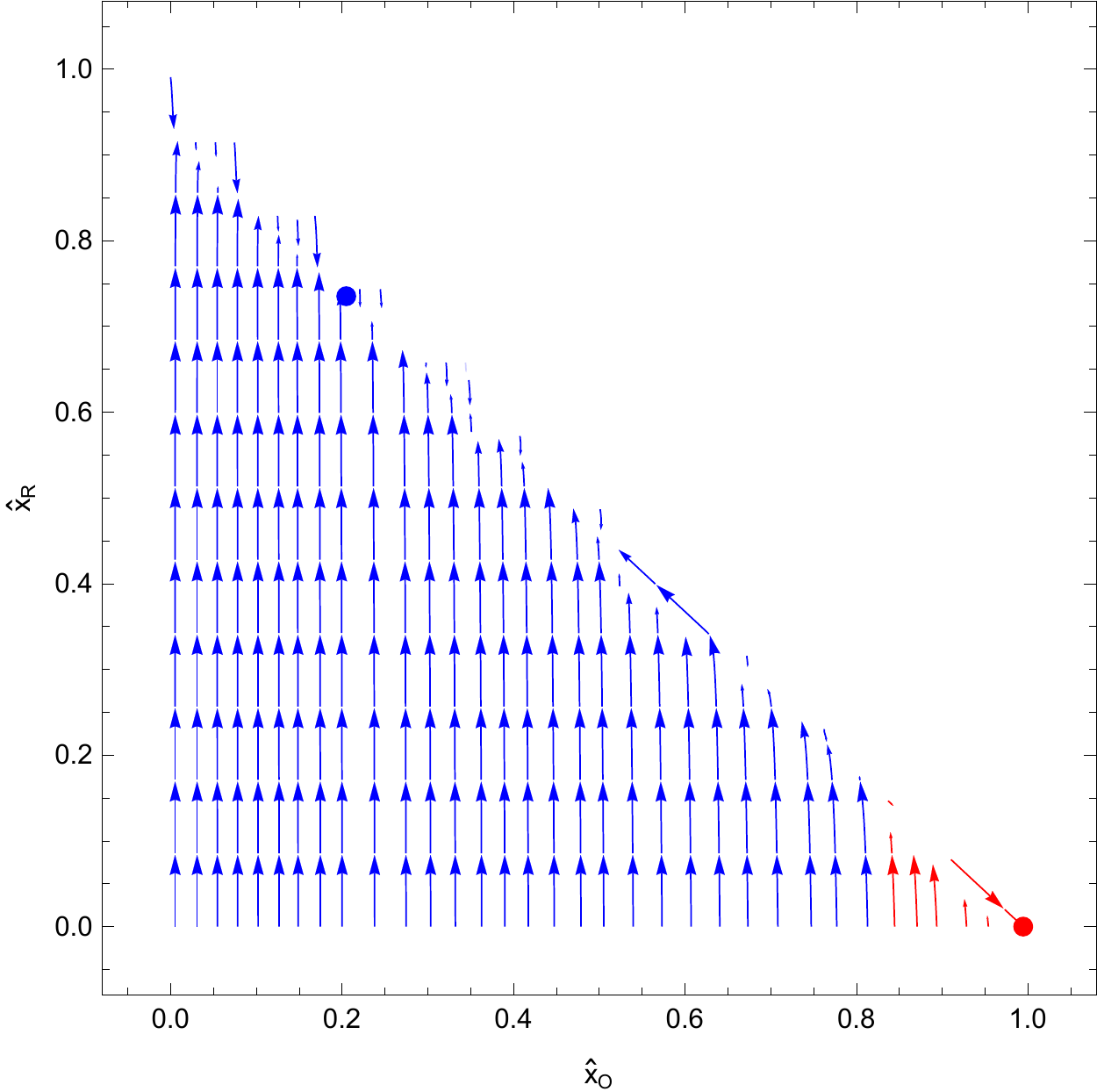}
\caption{Tendency of each $(\hat{x}_{O},\hat{x}_{R})$ pairs to go to either of the fixpoints (disks).} \label{fig:va}
\end{subfigure}
\begin{subfigure}{0.46\textwidth}
\includegraphics[scale=.44]{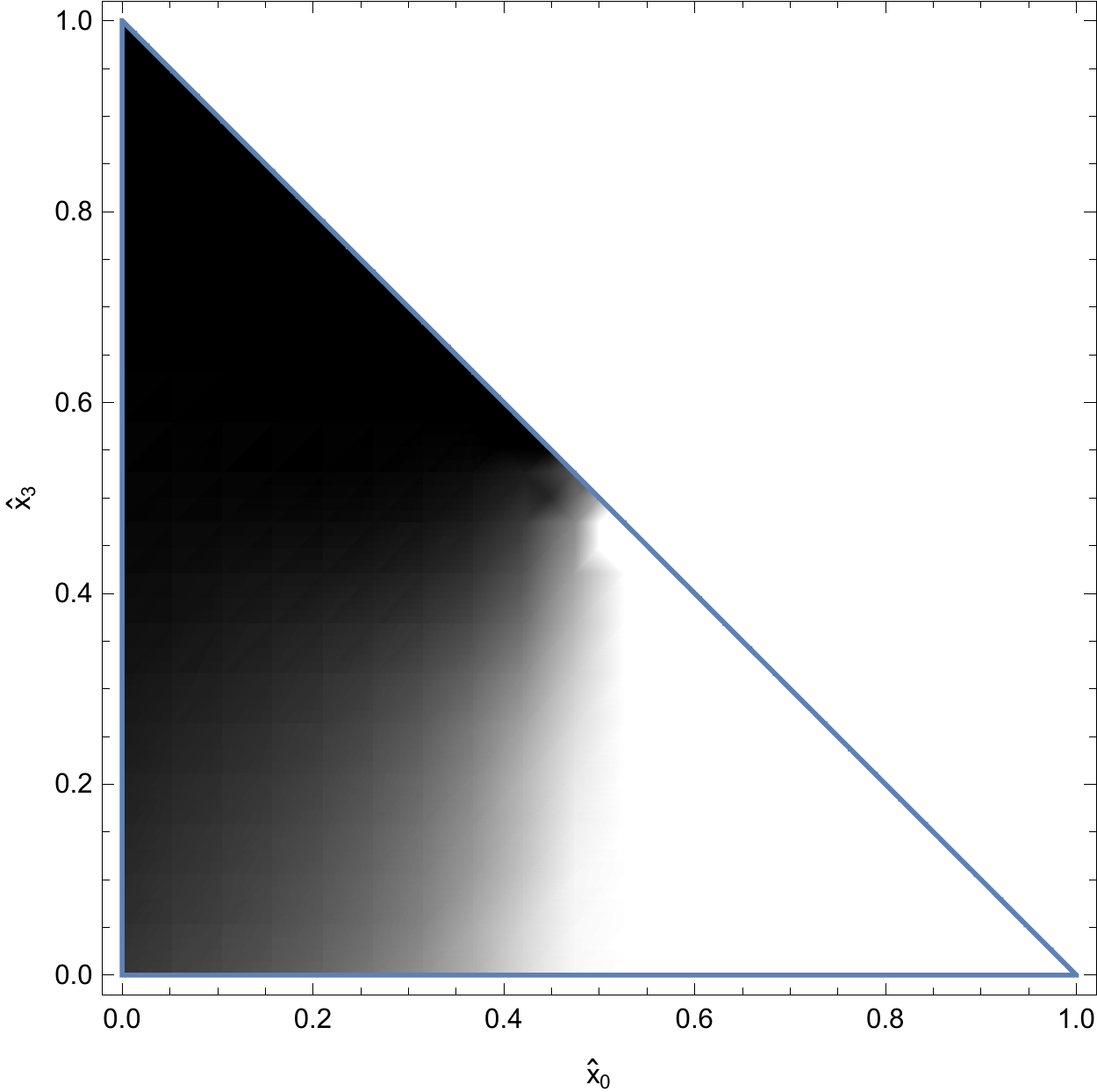}
\caption{Tendency of each $(\hat{x}_{0},\hat{x}_{3})$ pairs to go to either of the fixpoints.} \label{fig:vb}
\end{subfigure}
\caption{Visualization of bistable behavior in ALOHA} \label{fig:vectorfield}
\end{figure}

\section{Conclusion}
In this paper, we took crucial steps in modeling large wireless networks. We used insights from the field of wireless communication to describe communication models, and customize the mean-field theory. In this way we are able to reason about networks which are immensely large, and avoid common limitations in analyzing traditional models.

The reasons for the correctness of this modeling approach is twofold. First, we proposed semantics (Transformation~\ref{prop1} and \ref{prop2}) which fully conformed with theories that guarantee effective approximation of Markov chain models with systems of ODEs (Lemma \ref{lem:lem2} and \ref{lem:lem3}). Second, through examining ALOHA networks we were able to witness the same bistability phenomena that are commonly associated with them in practice. 

Finally, we demonstrated the modeling capabilities of this approach with a simple neighborhood discovery protocol. Moreover, we focused on systems with multiple equilibrium points and used vector-fields as a way to visualize the dynamic behavior of the system.

\paragraph{\textbf{Acknowledgments.}} The research from DEWI project (www.dewi-project.eu) leading to these results has received funding from the ARTEMIS Joint Undertaking under grant agreement No.~621353.

\bibliography{all}

\begin{thebibliography}{10}

\bibitem{abramson1970aloha}
Norman Abramson.
\newblock The aloha system: another alternative for computer communications.
\newblock In {\em Proceedings of the fall joint computer conference, 1970},
  pages 281--285. ACM, 1970.

\bibitem{bakhshi2010automating}
Rena Bakhshi, J{\"o}rg Endrullis, Stefan Endrullis, Wan Fokkink, and Boudewijn
  Haverkort.
\newblock Automating the mean-field method for large dynamic gossip networks.
\newblock In {\em QEST 2010}, pages 241--250. IEEE, 2010.

\bibitem{benaim2008class}
Michel Benaim and Jean-Yves Le~Boudec.
\newblock A class of mean field interaction models for computer and
  communication systems.
\newblock {\em Performance Evaluation}, 65(11):823--838, 2008.

\bibitem{bianchi2000performance}
Giuseppe Bianchi.
\newblock Performance analysis of the ieee 802.11 distributed coordination
  function.
\newblock {\em IEEE Journal on Selected Areas in Communications},
  18(3):535--547, 2000.

\bibitem{bordenave2005random}
Charles Bordenave, David Mcdonald, Alexandre Prouti{\`e}re, et~al.
\newblock Random multi-access algorithms-a mean field analysis.
\newblock 2005.

\bibitem{bortolussi2013continuous}
Luca Bortolussi, Jane Hillston, Diego Latella, and Mieke Massink.
\newblock Continuous approximation of collective system behaviour: A tutorial.
\newblock {\em Performance Evaluation}, 70(5):317--349, 2013.

\bibitem{bradley2008analysing}
Jeremy~T Bradley, Stephen~T Gilmore, and Jane Hillston.
\newblock Analysing distributed internet worm attacks using continuous
  state-space approximation of process algebra models.
\newblock {\em Journal of Computer and System Sciences}, 74(6):1013--1032,
  2008.

\bibitem{carleial1975bistable}
A~Carleial and ME~Hellman.
\newblock Bistable behavior of aloha-type systems.
\newblock {\em IEEE Transactions on Communications}, 23(4):401--410, 1975.

\bibitem{dandelski2015scalability}
Conrad Dandelski, B-L Wenning, D~Viramontes~Perez, Dirk Pesch, and J-PMG
  Linnartz.
\newblock Scalability of dense wireless lighting control networks.
\newblock {\em IEEE Communications Magazine}, 53(1):157--165, 2015.

\bibitem{hillston2005fluid}
Jane Hillston.
\newblock Fluid flow approximation of pepa models.
\newblock In {\em QEST 2005}, pages 33--42. IEEE, 2005.

\bibitem{namislo1984analysis}
Christian Namislo.
\newblock Analysis of mobile radio slotted aloha networks.
\newblock {\em IEEE Transactions on Vehicular Technology}, 33(3):199--204,
  1984.

\bibitem{roberts1975aloha}
Lawrence~G Roberts.
\newblock Aloha packet system with and without slots and capture.
\newblock {\em ACM SIGCOMM Computer Communication Review}, 5(2):28--42, 1975.

\bibitem{rosenkrantz1983instability}
Walter~A Rosenkrantz and Donald Towsley.
\newblock On the instability of the slotted aloha multiaccess algorithm.
\newblock {\em IEEE transactions on automatic control}, 28(10):994--996, 1983.

\bibitem{srinivasa2012statistical}
Sunil Srinivasa and Martin Haenggi.
\newblock A statistical mechanics-based framework to analyze ad hoc networks
  with random access.
\newblock {\em IEEE Transactions on Mobile Computing}, 11(4):618--630, 2012.

\bibitem{van1990stability}
Cornelis van~der Plas and J-PMG Linnartz.
\newblock Stability of mobile slotted aloha network with rayleigh fading,
  shadowing, and near-far effect.
\newblock {\em IEEE Transactions on Vehicular Technology}, 39(4):359--366,
  1990.

\end{thebibliography}
\end{document}